    \numberwithin{equation}{section}
    \def\tr{{\rm tr \,}}
    \def\Re{{\rm Re \,}}
    \def\Im{{\rm Im \,}}
    \def\bigO{{\cal O}}
    \def\Res{{\rm Res}}
    \def\P2n{{\rm P}_{{\rm II}}^{(n)}}
    \newtheorem{theorem}{Theorem}[section]
    \newtheorem{lemma}[theorem]{Lemma}
    \newtheorem{corollary}[theorem]{Corollary}
    \newtheorem{proposition}[theorem]{Proposition}
    \newtheorem{Definition}[theorem]{Definition}
    \newtheorem{Remark}[theorem]{Remark}
    \newenvironment{remark}{\begin{Remark}\rm}{\end{Remark}}
    \newtheorem{Example}[theorem]{Example}
    \newtheorem{Assumptions}[theorem]{Assumptions}
    \newcommand{\e}{\epsilon}
\newcommand{\lb}{\lambda}
    \newenvironment{proof}%
    {\rm \trivlist \item[\hskip \labelsep{\bf Proof. }]}%
    {\hspace*{\fill}$\Box$\endtrivlist}
    {\rm \trivlist \item[\hskip \labelsep{\bf Proof}]}%
    {\hspace*{\fill}$\Box$\endtrivlist}
    \newcommand{\sgn}{{\operatorname{sgn}}}
    \DeclareMathOperator*{\Tr}{Tr}
\begin{document}
\title{Pole-free solutions of the first Painlev\'e hierarchy and non-generic critical behavior for the KdV equation}
\author{Tom Claeys$^1$}

\maketitle
\footnotetext[1]{Universit\'e Catholique de Louvain, Chemin du cyclotron 2, B-1348 Louvain-La-Neuve, Belgium,\\
E-mail: tom.claeys@uclouvain.be, Tel.: +32 10 47 31 89, Fax: +32 10 47 25 30}

\begin{center}
{\em Dedicated to Boris Dubrovin on the occasion of his sixtieth birthday}
\end{center}
\begin{abstract}
We establish the existence of real pole-free solutions to all even
members of the Painlev\'e I hierarchy. We also obtain asymptotics
for those solutions and describe their relevance in the description
of critical asymptotic behavior of solutions to the KdV equation in
the small dispersion limit. This was understood in the case of a
generic critical point, and we generalize it here to the case of
non-generic critical points.
\end{abstract}

\section{Introduction and statement of results}
 We first introduce the
Painlev\'e I hierarchy, which has the Painlev\'e I equation
\begin{equation}
q_{ss}=s+6q^2
\end{equation}
as its first member. The $m$-th equation of the hierarchy is of order $2m$ and is defined recursively. It has the form
\begin{equation}\label{PIm}
s+\mathcal L_m(q)+\sum_{j=1}^{m-1}t_j\mathcal L_{j-1}(q)=0,\qquad
t_1,\ldots, t_{m-1}\in\mathbb R,
\end{equation}
where $\mathcal L_m$ is the Lenard-Magri recursion operator defined by
\begin{align}
&\label{L0}\mathcal L_0(q)=-4 q,\\
&\label{recursion Lm}\frac{d}{ds}\mathcal
L_{k+1}(q)=\left(\frac{1}{4}\frac{d^3}{ds^3}-2q\frac{d}{ds}-q_s\right)\mathcal
L_k(q),\qquad \mbox{ for $k=0, \ldots, m-1$.}
\end{align}
The constants of integration in (\ref{recursion Lm}) are fixed by
the requirements $\mathcal L_1(0)=\cdots =\mathcal L_{m}(0)=0$. One
could also add a term $t_{m}\mathcal L_{m-1}$ in (\ref{PIm}), but
this term cancels after a simple transformation of the other
variables $q,s,t_1, \ldots, t_{m-1}$. The first equations in the
hierarchy are given by (up to multiplication by a non-zero constant)
\begin{align*}
&m=0: && s-4q=0,\\
&m=1: && q_{ss}=s+6q^2,\\
&m=2: && q_{ssss}=4s-40q^3+10q_s^2+20qq_{ss}-16t_1q,\\
&m=3: &&
q^{(6)}=16s+28qq_{ssss}+56q_sq_{sss}+42q_{ss}^2-280(q^2q_{ss}+qq_s^2-q^4)\\
&\ &&\qquad\qquad\qquad\qquad\qquad+16t_{2}(6q^2-q_{ss})-64t_1q,\\
&m=4: &&q^{(8)}=64s+36qq^{(6)}+108q_sq^{(5)}+228q_{ss}q_{ssss}-504q^2q_{ssss}\\
&\ &&\qquad\qquad\qquad+138q_{sss}^2-1512qq_{ss}^2-1848q_s^2q_{ss}-2016 qq_sq_{sss}\\
&\ &&\qquad\qquad\qquad\qquad\qquad\qquad -2016 q^5 +3360q^3q_{ss}+5040q^2q_s^2\\
&\ &&+16t_3(-40q^3+10q_s^2+20qq_{ss}-q^{(4)})\\
&\ &&\qquad\qquad\qquad\qquad\qquad+64t_{2}(6q^2-q_{ss})-256t_1q.
\end{align*}
In the above equations we have written $q^{(j)}$ for the $j$-th
derivative of $q$ with respect to $s$. We will call equation
(\ref{PIm}) the $P_{\rm I}^{m}$ equation, and we refer to \cite{Kud,
MuganJrad, Shimomura, Takasaki, GoPi} for more information about the
first Painlev\'e hierarchy. Given $t_1, \ldots, t_{m-1}$, solutions
to these equations are meromorphic functions in the complex
$s$-plane, with in general an infinite number of poles
\cite{Shimomura}. 
Heuristic arguments supporting the existence of real pole-free solutions to the
    even members of the hierarchy were already given in \cite{BMP} in the case where
    $t_1=\ldots=t_{m-1}=0$. Their asymptotic behavior was discussed in \cite{Moore}.
A particular solution to the second member of the hierarchy appeared in \cite{Suleimanov1, Suleimanov2} in relation with the Gurevich-Pitaevskii special solutions to 
the KdV equation \cite{GP, Potemin}. This solution corresponds to the solution studied in \cite{BMP} for $t_1=0$, and has applications in the study of ideal incompressible liquids \cite{KS1, KS2} and quantum gravity \cite{DSS}, see also \cite{GST}.

After the rescalings
\[U=-60^{2/7}\cdot q,\qquad  X=60^{-1/7}\cdot s,\qquad  T=-4\cdot 60^{-3/7}\cdot t_1,\]
the second member of the hierarchy $P_{\rm I}^2$ becomes
\begin{equation}\label{PI2}
        X=TU-\left(\frac{1}{6}U^3+\frac{1}{24}(U_X^2+2UU_{XX})
            +\frac{1}{240}U_{XXXX}\right).
    \end{equation}
    This equation was studied by Dubrovin in \cite{Dubrovin},
    where he conjectured the existence and uniqueness of a real
    solution without poles for real values of $X$ and $T$. The
    existence of such a solution was proved in \cite{CV1}
    together with the asymptotic behavior $U(X,T)\sim \mp
    (6|X|)^{1/3}$ as $X\to\pm\infty$. 
   In addition it is known that
    $U(X,T)$ is also a solution to the KdV equation $U_T+UU_X+\frac{1}{12}U_{XXX}=0$
    \cite{GP, Potemin, Suleimanov1, Suleimanov2}.

\medskip

As a first result in this paper, we will prove the existence of real pole-free solutions for all even members of the hierarchy, we
will obtain asymptotics for them, and show that they follow, as functions of the time variables $t_1, \ldots, t_{m-1}$, the time
flows of the KdV hierarchy.
\begin{theorem}\label{theorem: PIm}
Let $m$ be an even positive integer. There exists a solution
$q=q(s,t_1, \ldots, t_{m-1})$ to equation (\ref{PIm}) which has the
properties
\begin{itemize}
\item[(i)] $q$ is real and has no poles for real values of $s,t_1, \ldots,
t_{m-1}$; for $s, t_1, \ldots, t_{m-1}$ in a sufficiently small
neighborhood of the real line, $q$ depends analytically on each of
its variables,
\item[(ii)] $q$ satisfies the PDE
\begin{equation}\label{KdVm}
q_{t_k}+\frac{1}{2k+1}\frac{d}{ds}\mathcal L_k=0,\qquad\mbox{ for
$k=1, \ldots, m-1$,}
\end{equation}
which is (up to re-scaling) the $k$-th equation in the KdV
hierarchy,
\item[(iii)] for $t_1, \ldots, t_{m-1}=0$, $q$ has the asymptotic behavior
\begin{equation}\label{as q}
q(s,0, \ldots,
0)=c|s|^{\frac{1}{m+1}}+\bigO(|s|^{-\frac{m}{m+1}}),\qquad\mbox{ as
$s\to\pm\infty$,}
\end{equation}
with
\begin{equation}
c=\frac{\sgn(s)}{2}\left(\frac{2^{m-1}(m+1)!}{(2m+1)!!}\right)^{\frac{1}{m+1}}.
\end{equation}
\end{itemize}
\end{theorem}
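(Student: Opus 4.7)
The plan is to construct $q$ as the solution of an isomonodromic Riemann--Hilbert (RH) problem associated with the Lax pair for $P_{\rm I}^m$. For even $m$, the relevant spectral problem has $2m+1$ Stokes rays in the complex $\lambda$-plane, and the dependence on parameters enters through a phase
\[
\theta(\lambda;s,t)=c_{2m+1}\lambda^{2m+1}+\sum_{k=1}^{m-1}c_{2k+1}t_k\lambda^{2k+1}+s\lambda
\]
with explicit positive constants $c_{2k+1}$ chosen so that the compatibility conditions recover the normalizations in (\ref{PIm}) and (\ref{KdVm}). I seek a piecewise analytic $2\times 2$ matrix $Y$ on the union of the rays, with triangular Stokes jumps $\bigl(\begin{smallmatrix}1&s_j\\0&1\end{smallmatrix}\bigr)$ or $\bigl(\begin{smallmatrix}1&0\\s_j&1\end{smallmatrix}\bigr)$ alternating between consecutive rays, normalized by $Y(\lambda)e^{-\theta(\lambda)\sigma_3}\to I$ as $\lambda\to\infty$. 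The Stokes multipliers $s_j$ are taken real and symmetric so that $Y(\bar\lambda)=\overline{Y(\lambda)}$ whenever the parameters are real; the function $q(s,t)$ is then read off from the subleading coefficient of the large-$\lambda$ expansion, and the Schwarz symmetry automatically gives $q\in\mathbb R$.

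The central technical step is to show that this RH problem is solvable for \emph{all} real values of $(s,t_1,\ldots,t_{m-1})$; pole-freeness of $q$ is equivalent to existence of $Y$. I would prove a vanishing lemma in the spirit of \cite{CV1} for $m=2$: if $H$ solves the associated homogeneous problem (identity jumps, vanishing at infinity), then $G(\lambda):=H(\lambda)H(\bar\lambda)^{*}$ continues to an entire matrix-valued function that decays at infinity, hence $G\equiv 0$, which forces $H\equiv 0$. The real-symmetric choice of $s_j$ and the parity of $m$ -- which keeps the ray pattern invariant under $\lambda\mapsto\bar\lambda$ -- are both essential. Given the vanishing lemma, solvability for real parameters follows from the Fredholm alternative applied to the singular integral equation equivalent to the RH problem, and analyticity of $Y$ (and hence of $q$) in a complex neighborhood of the real axis follows from analytic dependence of the jumps. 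I expect this vanishing lemma, required uniformly in the $m-1$ auxiliary times $t_k$, to be the main obstacle.

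Once $Y$ is constructed, part (ii) is essentially automatic. Differentiating $Y$ with respect to $s$ and to each $t_k$ yields polynomial-in-$\lambda$ Lax operators (the jumps are independent of the parameters after the twist by $e^{\theta\sigma_3}$), and their mutual compatibility reduces in the $s$-direction to $P_{\rm I}^m$ and in the $t_k$-direction to the $k$-th KdV-hierarchy flow (\ref{KdVm}); the prefactor $\frac{1}{2k+1}$ is absorbed into the coefficient $c_{2k+1}$ of $\theta$.

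For the asymptotics (iii) at $t_1=\cdots=t_{m-1}=0$, I would carry out a Deift--Zhou steepest descent analysis on the RH problem as $|s|\to\infty$. After the rescaling $\lambda\mapsto|s|^{1/(2m)}\lambda$ the phase becomes $|s|^{(2m+1)/(2m)}$ times an $s$-independent function, and a $g$-function built from the equilibrium measure on a single interval (real for $s<0$, rotated for $s>0$), whose endpoints are determined by an algebraic condition of degree $m$, yields the leading-order model. Opening lenses and constructing Airy parametrices at the two endpoints reduce the problem to a small-norm RH problem, and the leading constant $c$ in (\ref{as q}) is extracted from a moment of the equilibrium density; the combinatorial identity linking $(2m+1)!!$ and $2^{m-1}(m+1)!$ emerges from this moment evaluation. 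The error $\bigO(|s|^{-m/(m+1)})$ follows from the first correction in the small-norm estimate.
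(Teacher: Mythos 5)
Your overall strategy mirrors the paper's: construct $q$ from an isomonodromic RH problem, prove solvability on the real slice via a vanishing lemma based on a Schwarz-symmetric quadratic form, extract the $P_{\rm I}^m$ equation and the KdV flows from the Lax pair compatibility, and do Deift--Zhou steepest descent for the large-$|s|$ asymptotics. The paper writes the RH problem in a variable $\zeta$ (with half-integer powers of $\zeta$ and only four rays plus a branch cut), which under $\zeta=\lambda^2$ is the ``unfolded'' picture you describe; that difference is cosmetic. However, there are two substantive problems with your sketch.

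First, the degree of the phase is wrong. For $P_{\rm I}^m$ (an ODE of order $2m$) the exponent must be
\[
\theta=\tfrac{4}{2m+3}\zeta^{(2m+3)/2}+\sum_{j=1}^{m-1}\tfrac{4}{2j+1}t_j\zeta^{(2j+1)/2}+s\zeta^{1/2},
\]
i.e.\ $\lambda^{2m+3}$ as the leading power in the $\lambda=\zeta^{1/2}$ variable, not $\lambda^{2m+1}$ as you wrote. (Check $m=2$: $P_{\rm I}^2$ has phase $\frac47\zeta^{7/2}=\frac47\lambda^7$, not $\lambda^5$.) This off-by-one error then infects your rescaling: with the correct phase the right blow-up is $\lambda\mapsto|s|^{1/(2m+2)}\lambda$ (equivalently $\zeta\mapsto|s|^{1/(m+1)}\zeta$ in the paper), not $|s|^{1/(2m)}\lambda$. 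Your count of Stokes rays is likewise off; and the correct formula is what ultimately produces $z_0=-\sgn(s)\bigl(\tfrac{2^{m-1}(m+1)!}{(2m+1)!!}\bigr)^{1/(m+1)}$ and hence the claimed constant $c=-z_0/2$.

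Second, and more importantly for the viability of part (iii), you pass over the one genuinely delicate step in the asymptotics: after writing $g(\zeta)=(\zeta-z_0)^{3/2}p(\zeta/z_0)$ with $p$ a degree-$m$ polynomial determined by matching to $\hat\theta$ at $\infty$, one must prove that $g(\zeta)>0$ for $\zeta>z_0$ and that $\Im g_+'(\zeta)>0$ for $\zeta<z_0$ (to open lenses and get exponential decay of the off-diagonal jumps). The polynomial $p$ has explicit coefficients $c_j=(2j+1)!!/(2^j j!)$, and establishing these sign conditions is not a routine equilibrium-measure statement for general $m$; the paper needs Proposition~2.6 with a real estimate on the alternating sum of the $c_j$'s to get it. Your proposal asserts the lens-opening and small-norm reduction but gives no argument that the required inequalities hold, which is the main content of this part of the theorem. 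Also minor: in your vanishing lemma, $G(\lambda)=H(\lambda)H(\bar\lambda)^*$ does not become \emph{entire}; after collapsing the non-real lens jumps one gets a function analytic only in the upper half-plane and with decay $\mathcal O(\lambda^{-3/2})$, and the conclusion comes from $\int_{\mathbb R}G_+\,d\xi=0$ plus adding the Hermitian conjugate, not from Liouville.
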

\begin{remark}
The above results are not true for $m=1$, since it is known
 that there do not exist real pole-free solutions
to the Painlev\'e I equation \cite{Boutroux}. Also for $m>1$ odd,
we do not expect that our results can be generalized. Parts (i) and
(iii) are the essential parts of the theorem, part (ii) will follow
from rather standard arguments that express the relation between the
Painlev\'e I hierarchy and the KdV hierarchy.
\end{remark}
\begin{remark}For $t_1,\ldots,t_{m-1}\in\mathbb R$ fixed, our asymptotic analysis can be
generalized to obtain the asymptotics
\begin{equation}\label{as q2}
q(s,t_1, \ldots,
t_{m-1})=c|s|^{\frac{1}{m+1}}+\bigO(|s|^{-\frac{1}{m+1}}),\qquad\mbox{
as $s\to\pm\infty$,}
\end{equation}
see Remark \ref{remark tj} below. Note that the error term is weaker in
the case of non-zero $t_j$'s than in (\ref{as q}).
\end{remark}
\begin{remark}
The leading order of the asymptotic behavior of $q$ as
$s\to\pm\infty$ for fixed $t_1, \ldots, t_{m-1}$ is relatively
simple, and can be formally obtained when neglecting all derivatives
of $q$ in the equations of the hierarchy. It was proved in \cite{DaiZhang} that solutions to the 
Painlev\'e I hierarchy with this asymptotic behavior exist as $x\to \infty$ in a sector containing the positive real line. If one takes double
scaling limits where $t_1, \ldots, t_{m-1}$ tend to infinity
simultaneously with $s$, the situation becomes more complicated.
Then the type of asymptotics will depend on the precise scaling of
all variables. For example it can be expected that the asymptotics
for $q$ can be expressed in terms of elliptic $\theta$-functions in
some regions and, for $m>2$, in terms of hyperelliptic
$\theta$-functions in other regions. For critical scalings of the
variables, one can even expect asymptotics for $q=q_m$ in terms of
the pole-free solutions $q_2, q_4, \ldots, q_{m-2}$ of the lower
order equations in the Painlev\'e I hierarchy, and in terms of
certain solutions to the Painlev\'e II hierarchy, see the discussion
in \cite{KS1} and \cite{C}.
\end{remark}

\subsection{Critical behavior for KdV solutions in the small dispersion limit}

Let us consider the KdV equation
\begin{equation}\label{KdV}u_t+6uu_x+\e^2u_{xxx}=0,\qquad \e>0.\end{equation}
For small $\e$, this is an example of a Hamiltonian perturbation of
the Hopf equation $u_t+6uu_x=0$. Solutions to the Hopf equation
exist only for small times, and develop a point of gradient
catastrophe after a certain time. Indeed, consider for example
initial data $u_0$ which are negative, smooth, tending to $0$
rapidly at $\pm\infty$ and with a single local minimum. Then the
method of characteristics describes the solution in terms of the
initial data:
\[ u(x,t)=u_0(\xi),\qquad x=6tu_0(\xi)+\xi,
\]
and the slope becomes infinite at the critical time
\[
t_c=\dfrac{1}{\max_{\xi\in\mathbb{R}}[-6u'_0(\xi)]}.
\]
The point $x_c$ and time $t_c$ where $u_x$ blows up, and the value
$u_c=u(x_c,t_c)$ are also determined by the equations
\begin{align}
\label{xc1}
&F(u;x,t):=-x+6ut+f_L(u)=0,\\
\label{tc1}
&F'(u;x,t)=6t+f_L'(u)=0,\\
\label{uc1} &F''(u;x,t)=f_L''(u)=0,
\end{align}
where $f_L$ is the inverse of the decreasing part of the initial
data $u_0$. For generic initial data we have $f_L'''(u_c)\neq 0$,
which means that the Hopf solution behaves locally as
\begin{equation}u(x,t_c)=u_c-c(x-x_c)^{1/3}+\bigO(x-x_c),\qquad\mbox{ as $x\to
x_c$}.\end{equation} For non-generic initial data however, it can
happen that $f_L'''(u_c)=0$ and that
\begin{equation}u(x,t_c)= u_c-c(x-x_c)^{\frac{1}{m+1}}+\bigO(x-x_c),\qquad\mbox{ as $x\to
x_c$},\end{equation} for any even value of $m$. This is the case if
the initial data are such that
\begin{equation}\label{non-generic}
f_L^{(3)}(u_c)=f_L^{(4)}(u_c)=\ldots = f_L^{(m)}(u_c)=0, \qquad
f_L^{(m+1)}(u_c)\neq 0.
\end{equation}

 It was conjectured
by Dubrovin \cite{Dubrovin, Dubrovin2, Dubrovin3, Dubrovin4, DGK}
that the behavior of generic solutions to any Hamiltonian
perturbation of a hyperbolic equation near the critical point of the
unperturbed equation is described universally in terms of the
pole-free solution to the second member (of order $4$) of the
Painlev\'e I hierarchy: there should be an expansion of the form
\begin{equation}
\label{univer} u(x,t,\e)= u_c +a_1\e^{2/7}U \left( a_2\e^{-6/7}(x-
x_c-a_3(t-t_c)), a_4\e^{-4/7}(t-t_c)\right) +\bigO\left(
\epsilon^{4/7}\right),
\end{equation}
in a double scaling limit where $\e\to 0$ and simultaneously $x\to
x_c$, $t\to t_c$ in such a way that $\e^{-6/7}(x- x_c-a_3(t-t_c))$
and $\e^{-4/7}(t-t_c)$ tend to real constants, and where $U$ is the
pole-free solution to equation (\ref{PI2}). The values of $a_1,
\ldots, a_4$ depend on the equation and the initial data, but not on
$x,t,\e$.
 In the case of the KdV equation, this was proved afterwards \cite{CG} for analytic
negative initial data with sufficient decay at $\pm\infty$ and with
a single local minimum, under the condition that the gradient
catastrophe for the Hopf equation is generic, i.e.\
$f_L'''(u;x_c,t_c)\neq 0$. If this condition is not satisfied, we
will prove here that the KdV solution is no longer described in
terms of the pole-free solution to the $P_{\rm I}^2$ equation, but
in terms of a solution to a higher order equation in the Painlev\'e
I hierarchy satisfying the properties given in Theorem \ref{theorem:
PIm}. The order of the equation is determined by the number of
vanishing derivatives of $f_L$: if we have (\ref{non-generic}) for
$m\in 2\mathbb N$, the pole-free solution to the $P_{\rm I}^{m}$
equation of order $2m$ will appear.

\medskip

We consider initial data $u_0(x)$ in the class of negative functions
with only one local minimum, and such that $u_0$ can be extended to
an analytic function $u_0(z)$ in a region of the form
\[ \mathcal{S}=\{z\in\mathbb{C}: |\Im z|<\tan\theta|\Re
z|\}\cup\{z\in\mathbb C: |\Im z|<\sigma\}
\]
for some $0<\theta<\pi/2$ and $\sigma>0$. In addition we need
sufficient decay at infinity,
\begin{equation}
u_0(x)=\bigO\left(\frac{1}{|x|^{3+s}}\right), \;\;s>0, \quad x\in
\mathcal S,\quad x\to\infty.
\end{equation}
The local minimum is localized at a point $x_M$ and we assume that
$u_0''(x_M)\neq 0$ and $u_0(x_M)=-1$.

\begin{theorem}\label{theorem: KdV}
Let $u_0(x)$ be initial data for the Cauchy problem of the KdV
equation satisfying the conditions described above, and assume that
we have (\ref{non-generic}) for $m\in 2\mathbb N$. Write
$u_c=u(x_c,t_c)$ for the Hopf solution at the point $x_c$ and time
$t_c$ of gradient catastrophe of the  Hopf equation. We take a
double scaling limit where we let $\epsilon\to 0$ and at the same
time $x\to x_c$ and $t\to t_c$ in such a way that, for some
$\tau_0,\tau_1\in\mathbb R$,
\begin{equation}
\lim\frac{x-x_c-6u_c(t-t_c)}{k^{1/2}\e^{\frac{2m+2}{2m+3}}}=\tau_0,
\qquad -\lim\frac{3(t-t_c)}{k^{3/2}\e^{\frac{2m}{2m+3}}}= \tau_1,
\end{equation}
where
\begin{equation}\label{k}
k=\left(-\frac{2^{m-1}}{(2m+1)!!}f_L^{(m+1)}(u_c)\right)^{\frac{2}{2m+3}}>0.
\end{equation} In this double scaling limit the solution $u(x,t,\epsilon)$ of
the KdV equation (\ref{KdV}) admits the asymptotic expansion
\begin{equation}
\label{expansionu} u(x,t,\e)=u_c -\frac{2}{k}\e^{\frac{2}{2m+3}} q
\left(\e^{-\frac{2m+2}{2m+3}}\tau_0(x,t,\e),\e^{-\frac{2m}{2m+3}}\tau_1(t,\e),0,0,\ldots,
0\right) +\bigO( \epsilon^{\frac{4}{2m+3}}),
\end{equation}
where
\begin{equation}
\tau_0(x,t,\e)=\frac{x-x_c-6u_c(t-t_c)}{k^{1/2}},\qquad
\tau_1(t,\e)=-\frac{3(t-t_c)}{k^{3/2}}.
\end{equation}
Here $q(s,t_1, t_2, \ldots, t_{m-1})$ is a solution to equation
(\ref{PIm}) which has properties (i)-(ii)-(iii) given in Theorem
\ref{theorem: PIm}.
\end{theorem}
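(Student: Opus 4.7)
The plan is to adapt the Riemann--Hilbert steepest descent analysis of \cite{CG}, which established the universality conjecture in the generic case $m=2$, to the higher-order degenerate gradient catastrophe governed by (\ref{non-generic}). The critical new ingredient supplied by Theorem \ref{theorem: PIm} is the existence of a real pole-free solution to the $P_{\rm I}^m$ equation with the explicit leading behavior (\ref{as q}); this solution will furnish the local parametrix at the critical point.

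First, I would set up the inverse scattering Riemann--Hilbert problem for $u(x,t,\e)$ with initial data $u_0$, using the WKB description of the reflection coefficient for the semiclassical Schr\"odinger operator and exploiting the analyticity of $u_0$ in the sector $\mathcal S$. After a $g$-function transformation and opening of lenses, for $t$ slightly below $t_c$ the problem reduces to a small-norm RH problem whose outer parametrix is built from the one-cut equilibrium measure associated with the Hopf flow; this recovers $u(x,t,\e)=u_H(x,t)+\bigO(\e^2)$ away from the critical point, where $u_H$ is the Hopf solution determined by (\ref{xc1})--(\ref{uc1}).

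Second, I would Taylor expand $F(u;x,t)=-x+6ut+f_L(u)$ around $(u_c,x_c,t_c)$. Under (\ref{non-generic}), the derivatives $F^{(2)},\ldots,F^{(m)}$ in the $u$ variable vanish at the critical point while $F^{(m+1)}(u_c)=f_L^{(m+1)}(u_c)\neq 0$. Substituting the rescalings suggested by the theorem, namely $u-u_c = -\frac{2}{k}\e^{\frac{2}{2m+3}} v$ together with $\tau_0,\tau_1$ as defined in the statement, the equation $F(u;x,t)=0$ turns precisely into the leading-order balance $s+\mathcal L_m(q)+\cdots=0$ obtained from (\ref{PIm}) after dropping all derivatives of $q$. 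This matching fixes the exponents $\frac{2m+2}{2m+3}$ and $\frac{2m}{2m+3}$ as well as the constant $k$ in (\ref{k}), and identifies the conjectural inner approximation of $u-u_c$ with $-\frac{2}{k}\e^{\frac{2}{2m+3}} q$ evaluated at the rescaled arguments, as in (\ref{expansionu}).

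Third, and this is the main step, I would construct a local parametrix inside a fixed disk around the critical point. Its RH problem is posed on a system of rays emanating from the origin and is designed so that its solution is produced by the Lax pair of the $P_{\rm I}^m$ hierarchy evaluated at the real pole-free solution $q$ provided by Theorem \ref{theorem: PIm}; the existence of the parametrix is then equivalent to property (i) of that theorem. Matching with the outer parametrix on the boundary of the disk is ensured by the asymptotic (\ref{as q}): to leading order the inner parametrix agrees with the outer one, so the ratio yields a small-norm RH problem whose solution gives the correction $\bigO(\e^{\frac{4}{2m+3}})$. Property (ii) of Theorem \ref{theorem: PIm}, i.e.\ the KdV flows (\ref{KdVm}), is precisely what guarantees that a single function $q(s,t_1,\ldots,t_{m-1})$ can be used consistently as $(x,t)$ varies, and in particular it is what makes the specialization $t_1=\ldots=t_{m-1}=0$ compatible with the KdV time evolution of $u(x,t,\e)$. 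Finally, extracting $u$ via the standard reconstruction formula from the solution of the full RH problem yields the expansion (\ref{expansionu}). The chief obstacle is the construction and matching of the local $P_{\rm I}^m$ parametrix: one must specify Stokes data on the appropriate rays, identify the resulting RH problem with the one characterizing the pole-free solution of Theorem \ref{theorem: PIm}, and control the matching to sufficient order in $\e$. For $m=2$ this was carried out in \cite{CG}; for higher even $m$ the higher-order Lax pair must be handled, but the essential analytic input---existence and real-line asymptotics of $q$---is exactly what Theorem \ref{theorem: PIm} supplies.
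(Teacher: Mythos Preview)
Your overall strategy---RH formulation of KdV, Deift--Zhou steepest descent, outer parametrix, local parametrix built from the $P_{\rm I}^m$ model problem, and a final small-norm argument---matches the paper's approach. However, there is a genuine conceptual gap in your third step.

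You write that ``Matching with the outer parametrix on the boundary of the disk is ensured by the asymptotic (\ref{as q})''. This is incorrect. The asymptotic (\ref{as q}) describes $q(s,0,\ldots,0)$ as $s\to\pm\infty$, but in the double scaling limit the parameters at which the model problem is evaluated satisfy $s=\e^{-\frac{2m+2}{2m+3}}\tau_0(\lambda;x,t)\to\tau_0$ and $t_1=\e^{-\frac{2m}{2m+3}}\tau_1(\lambda;t)\to\tau_1$, i.e.\ they remain \emph{bounded}. What tends to infinity on $\partial\mathcal U$ is the spectral variable $\zeta=\e^{-\frac{2}{2m+3}}f(\lambda)$. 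The matching is therefore ensured by the large-$\zeta$ expansion (\ref{RHP Psi:c}) of the RH solution $\Psi$ (equivalently (\ref{RHP Phi: c}) for the rotated function $\Phi$), not by any asymptotic of $q$ in $s$. The crucial analytic input from Theorem~\ref{theorem: PIm} is property~(i)---solvability of the model RH problem for real $(s,t_1)$, and hence uniformly in a complex neighbourhood of $(\tau_0,\tau_1)$---which guarantees that $\Psi$ and its expansion (\ref{RHP Psi:c}) are available; property~(iii) plays no role in the matching.

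A smaller point: property~(ii) (the KdV flows (\ref{KdVm})) is not used in the paper's proof of Theorem~\ref{theorem: KdV} at all. The parametrix is built directly from $\Psi(\zeta;s,t_1,0,\ldots,0)$ via an explicit conformal map $f$ and analytic functions $\tau_0(\lambda;x,t),\tau_1(\lambda;t)$ chosen so that the phase $\widetilde\theta$ of the model problem reproduces the KdV phase $\phi/\e$ exactly (this is where the integration-by-parts identity (\ref{phi2}), which uses the vanishing (\ref{non-generic}), enters). No compatibility argument invoking the flows is needed. Also, note that $t_1$ is \emph{not} set to zero in the final formula: only $t_2,\ldots,t_{m-1}$ vanish.
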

\begin{remark}
In the case of a generic critical point where $m=2$, Theorem
\ref{theorem: KdV} describes the main theorem proved in \cite{CG}.
The result is new for $m>2$.
\end{remark}
\begin{remark}
It is likely that the above result holds also for other equations in
the universality class of Dubrovin \cite{Dubrovin}, which contains
among others the Camassa-Holm equation, the de-focusing nonlinear
Schr\"odinger equation, and the KdV hierarchy.
\end{remark}
\subsubsection{Correlation kernels in critical unitary random matrix ensembles}
The first conjecture about the existence of real pole-free solutions
to the $P_I^m$ equations for even $m$ was posed in random matrix
theory \cite{BB, BMP} in the case where $t_1=\ldots=t_{m-1}=0$.
However, it leads no doubt that also the general case with non-zero
$t_j$'s is of interest when studying critical unitary random matrix
ensembles. Consider the space of Hermitian $n\times n$ matrices with
a probability distribution of the form
\begin{equation}\label{RMT}\frac{1}{Z_{n}}\exp(-n\,\tr V(M))dM,\end{equation} where $V$ is a scalar
real analytic function with sufficient growth at $\pm\infty$, for
example a polynomial of even degree with positive leading
coefficient. The limiting mean eigenvalue distribution for random
matrices in such an ensemble has the form \cite{DKM}
\begin{equation*}
d\mu_V(x)=\psi_V(x)dx,\qquad
\psi_V(x)=\prod_{j=1}^k\sqrt{(b_j-x)(x-a_j)}h(x),\quad\mbox{for
$x\in\cup_{j=1}^k[a_j,b_j]$,}
\end{equation*}
where $h$ is a real analytic function. The number of intervals in
the spectrum, the endpoints $a_j, b_j$, and $h(x)$ depend on the
confining potential $V(x)$ and can be found in terms of the unique
solution to an equilibrium problem. Generically $h$ does not vanish
on the intervals $[a_j, b_j]$, and in particular not at the
endpoints $a_j, b_j$, so that the limiting mean eigenvalue density
vanishes like a square root at the endpoints of the spectrum
\cite{KM}.

The two-point eigenvalue correlation kernel in the model (\ref{RMT})
is given by
\begin{equation*}
    K_n(x,y)
        =\frac{e^{-\frac{n}{2}V(x)}
e^{-\frac{n}{2}V(y)}}{x-y}\frac{\kappa_{n-1}}{\kappa_{n}}
(p_n(x)p_{n-1}(y)-p_n(y)p_{n-1}(x)),
\end{equation*}
where the $p_j$'s are polynomials orthonormal with respect to the
weight $e^{-nV}$ on the real line; the leading coefficient of $p_j$
is $\kappa_j$. Scaling limits of the two-point kernel give rise to
well-known limiting kernels such as the sine kernel in the bulk of
the spectrum (where $\psi_V$ is positive) and the Airy kernel near
an endpoint $a_j$ (or $b_j$) where $h(a_j)\neq 0$ (or $h(b_j)\neq
0$) \cite{Deift, DKMVZ2, DKMVZ1}. Near points in the spectrum where
$h(x)=0$, more complicated transcendental kernels appear in double
scaling limits. Near singular edge points where $h(b_j)=0,
h'(b_j=0), h''(b_j)\neq 0$, a kernel related to the $P_I^2$ equation
was obtained \cite{CV2}. Near higher order singular points where
$h(x)\sim c(x-b_j)^{m+\frac{1}{2}}$ for $m$ even (the case where $m$
is odd cannot occur in unitary random matrix ensembles of the form
(\ref{RMT})), it is natural to expect a kernel related to the
$P_I^m$ equation. If one takes double scaling limits where $V$
depends on $n$ in a suitable way, the limiting kernel should depend
on $s$ and also on the $m-1$ time variables $t_1,\ldots, t_{m-1}$.
This is a difference compared to the situation in Theorem
\ref{theorem: KdV}, where only one time variable $t_1$ is non-zero.

\subsubsection*{Outline}
In Section \ref{section: RHP}, we prove the results stated in
Theorem \ref{theorem: PIm} about the equations in the first
Painlev\'e hierarchy. We will construct the pole-free solutions $q$
in terms of a Riemann-Hilbert (RH) problem which depends on $s,t_1,
\ldots, t_{m-1}$, and $m$. We will prove the solvability of this RH
problem for $m$ even and $s,t_1, \ldots, t_{m-1}\in\mathbb R$. This
will imply the absence of real poles for $q$. Using a Lax pair
argument, we will explain the relation between the RH problem, the
equations in the Painlev\'e I hierarchy, and the KdV hierarchy. An
asymptotic analysis of the RH problem will lead to asymptotics for
$q$ as $s\to\pm\infty$.

\medskip

In Section \ref{section 3}, we focus on Theorem \ref{theorem: KdV}.
We will state a well-known RH problem which characterizes solutions
to the Cauchy problem for the KdV equation, and rely on the
techniques developed in \cite{DVZ, DVZ2, CG} to transform this RH
problem to an equivalent one suitable for asymptotic analysis as
$\e\to 0$. The main new point is the construction of a local
parametrix using the model RH problem associated to the pole-free
solution to the $P_{\rm I}^m$ equation studied in the first part.

\section{Pole-free solutions to the Painlev\'e I hierarchy}\label{section:
RHP}

In this section, we will construct the solutions $q$ occurring in
Theorem \ref{theorem: PIm} in terms of a matrix RH problem. At
several points, we will refer to \cite{CV1} where the proof of
Theorem \ref{theorem: PIm} has been given in the case $m=2$, and
where a more detailed exposition can be found.

We consider the following RH problem.

\begin{figure}[t]
\begin{center}
    \setlength{\unitlength}{1truemm}
    \begin{picture}(100,48.5)(0,2.5)
        \put(50,27.5){\thicklines\circle*{.8}}
        \put(49,30){\small $\hat z_0$}

        \put(50,27.5){\line(-2,1){40}}
        \put(50,27.5){\line(-2,-1){40}}
        \put(5,27.5){\line(1,0){80}}

        \put(76,29){$\Gamma_1$}
        \put(17,45){$\Gamma_2$}
        \put(10,29){$\Gamma_3$}
        \put(17,14){$\Gamma_4$}

        \put(55,41){I}
        \put(13,35){II}
        \put(12,19){III}
        \put(55,14){IV}

        \put(30,37.5){\thicklines\vector(2,-1){.0001}}
        \put(30,17.5){\thicklines\vector(2,1){.0001}}
        \put(25,27.5){\thicklines\vector(1,0){.0001}}
        \put(70,27.5){\thicklines\vector(1,0){.0001}}
    \end{picture}
    \caption{The jump contour $\Gamma$.}
    \label{figure: RHP Psi}
\end{center}
\end{figure}
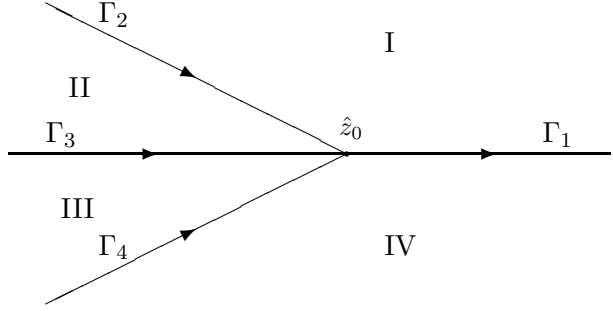

\subsubsection*{RH problem for $\Psi$:}

\begin{itemize}
    \item[(a)] $\Psi:\mathbb{C}\setminus\Gamma\to\mathbb C^{2\times 2}$ is analytic,
    where $\Gamma=\bigcup_{j=1}^4\Gamma_j$ is a contour consisting of
four straight rays
\begin{align*}
    &\Gamma_1:\arg(\zeta-\hat z_0)=0,&\Gamma_2:\arg(\zeta-\hat z_0)=\theta,\\&\Gamma_3:\arg(\zeta-\hat z_0)=\pi,&\Gamma_4:\arg(\zeta-\hat z_0)=-\theta,
\end{align*}
each of them oriented from the left to the right, see Figure
\ref{figure: RHP Psi}. Here $\hat z_0$ can be any real number, and
$\frac{2m+1}{2m+3}\pi<\theta<\pi$.
    \item[(b)] $\Psi$ has continuous boundary values $\Psi_\pm(\zeta)$ when approaching $\zeta\in\Gamma\setminus\{\hat z_0\}$
    from the left or right, and they are related by the jump
    conditions
    \begin{align}
        \Psi_+(\zeta)&=\Psi_-(\zeta)
            \begin{pmatrix}
                1 & 1 \\
                0 & 1
            \end{pmatrix},&& \mbox{for $\zeta\in\Gamma_1$,}\\[1ex]
        \Psi_+(\zeta)&=\Psi_-(\zeta)
            \begin{pmatrix}
                1 & 0 \\
                1 & 1
            \end{pmatrix},&& \mbox{for
            $\zeta\in\Gamma_2\cup\Gamma_4$,}\\[1ex]
        \Psi_+(\zeta)&=\Psi_-(\zeta)
            \begin{pmatrix}
                0 & 1 \\
                -1 & 0
            \end{pmatrix},&& \mbox{for $\zeta\in\Gamma_3$.}
    \end{align}
    \item[(c)] As
    $\zeta\to\infty$, $\Psi$ has an expansion of the form
    \begin{equation}\label{RHP Psi:c}
        \Psi(\zeta)=\zeta^{-\frac{1}{4}\sigma_3}N\left(I+h\sigma_3\zeta^{-1/2}
        +\frac{1}{2}\begin{pmatrix}h^2 & -iq\\iq &
        h^2\end{pmatrix}\zeta^{-1}+\bigO(\zeta^{-3/2})\right)
        e^{-\theta(\zeta;s,t_1, \ldots, t_{m-1})\sigma_3},
    \end{equation}
    where $h,q$ do not depend on $\zeta$, where
    \begin{equation}\label{def N}
        N=\frac{1}{\sqrt 2}
        \begin{pmatrix}
             1 & 1 \\
             -1 & 1
        \end{pmatrix}e^{-\frac{1}{4}\pi i\sigma_3},\qquad \sigma_3=\begin{pmatrix}1&0\\0&-1\end{pmatrix},\end{equation}
        and
        \begin{equation}\label{def theta}
        \theta(\zeta;s,t_1,\ldots, t_{m-1})=\frac{4}{2m+3}\zeta^{\frac{2m+3}{2}}+\sum_{j=1}^{m-1}\frac{4}{2j+1}t_j\zeta^{\frac{2j+1}{2}}\ +s\zeta^{1/2}.
        \end{equation}
\end{itemize}
\begin{remark}\label{remark contour}
The jump contour $\Gamma=\Gamma(\hat z_0,\theta)$ of the RH problem
depends on the values of $\hat z_0$ and $\theta$, but the RH
solution corresponding to $(\hat z_0,\theta)$ can be transformed
directly to the RH solution for any other value of $(\hat z_0',
\theta')$, as long as $\frac{2m+1}{2m+3}\pi<\theta'<\pi$. This can
be done by analytic continuation of the RH solution $\Psi$ across
its jump contour. In this section, we will fix the values of $\hat
z_0=0$ and $\theta=\frac{2m+2}{2m+3}\pi$, but for the asymptotic
analysis in Section \ref{section: asymptotics}, we will need to choose $\hat z_0,\theta$ more carefully.
Near $\hat z_0$, we need to impose that $\Psi$ remains bounded in order to have a unique solution.
\end{remark}

The RH problem depends on $s, t_1, \ldots, t_{m-1}$. For fixed real
values of those parameters, we will show that there exists a
(unique) RH solution $\Psi(\zeta)$ and constants $h$ and $q$ such
that (\ref{RHP Psi:c}) holds. We have $\Psi=\Psi(\zeta;s,t_1,
\ldots, t_{m-1})$, $h=h(s,t_1,\ldots, t_{m-1})$, and
$q=q(s,t_1,\ldots, t_{m-1})$. We will show that $q$ satisfies the
conditions stated in Theorem \ref{theorem: PIm}.

\begin{remark}
If $m=0$ and $s=0$, this is a well-known RH problem which can be
solved explicitly in terms of the Airy function and its derivative.
For $m=2$, $\tilde\zeta=60^{2/7}\zeta$, $T=-4\cdot 60^{-3/7}t_1$,
and $X=60^{-1/7}s$, we obtain the RH problem studied in \cite{CV1}
related to the pole-free solution to the fourth order equation
(\ref{PI2}) which is equivalent to equation (\ref{PIm}) for $m=2$.
\end{remark}

\subsection{Solvability of the RH problem}
In order to prove the solvability of the RH problem for $\Psi$, we
perform the transformation
\begin{equation}\label{def Phi}
\Phi(\zeta)=\begin{pmatrix}1&0\\h&1\end{pmatrix}\Psi(\zeta)e^{\theta(\zeta)\sigma_3},
\end{equation}
where we have suppressed the dependence on $s,t_1, \ldots, t_{m-1}$
in our notations. This leads to a slightly modified RH problem for
$\Phi$.
\subsubsection*{RH problem for $\Phi$:}
\begin{itemize}
    \item[(a)] $\Phi$ is analytic in $\mathbb{C}\setminus\Gamma$.
    \item[(b)] We have the jump relations
    \begin{align}
        \Phi_+(\zeta)&=\Phi_-(\zeta)
            \begin{pmatrix}
                1 & e^{-2\theta(\zeta)} \\
                0 & 1
            \end{pmatrix},&& \mbox{for $\zeta\in\Gamma_1$,}\\[1ex]
        \Phi_+(\zeta)&=\Phi_-(\zeta)
            \begin{pmatrix}
                1 & 0 \\
                e^{2\theta(\zeta)} & 1
            \end{pmatrix},&& \mbox{for
            $\zeta\in\Gamma_2\cup\Gamma_4$,}\\[1ex]
        \Phi_+(\zeta)&=\Phi_-(\zeta)
            \begin{pmatrix}
                0 & 1 \\
                -1 & 0
            \end{pmatrix},&& \mbox{for $\zeta\in\Gamma_3$.}
    \end{align}
    \item[(c)] As
    $\zeta\to\infty$, $\Phi$ has an expansion of the form
    \begin{equation}\label{RHP Phi:c}
        \Phi(\zeta)=\left(I+\frac{A_1}{\zeta}+\bigO(\zeta^{-2})\right)\zeta^{-\frac{1}{4}\sigma_3}N.
    \end{equation}
    It is straightforward to verify by (\ref{RHP Psi:c}) and (\ref{def Phi}) that $q$ is given by
\begin{equation}\label{qA1}
q=A_{1,12}^2-2A_{1,11}.
\end{equation}
\end{itemize}

Following the general procedure developed in \cite{FokasZhou,
FokasMuganZhou, DKMVZ2}, the solvability of a large class of RH
problems is equivalent to the fact that a homogeneous version of the
RH problem has no solution except the trivial zero solution. This
follows from the observation that a singular integral operator
associated to the RH problem is a Fredholm operator of index zero.
This procedure is explained in detail in \cite{DKMVZ2, IKO} and in
\cite{CV1} for the above RH problem in the case where $m=2$. All of
the arguments work for general $m$ and general real values of the
$t_j$'s and we will not repeat them here. In order to have
solvability of the RH problem for $\Phi$, it is sufficient to prove
the following vanishing lemma.
\begin{lemma}{\bf (Vanishing lemma)}
Let $s,t_1, \ldots, t_{m-1}\in\mathbb R$, let $\Phi_0$ satisfy
conditions (a) and (b) of the RH problem for $\Phi$, and let in
addition
\begin{equation}
\Phi_0(\zeta)=\bigO(\zeta^{-3/4}),\qquad\mbox{ as $\zeta\to
\infty$.}\end{equation} Then $\Phi_0\equiv 0$.
\end{lemma}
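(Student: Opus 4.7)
The plan is to run the Fokas--Zhou vanishing-lemma argument exactly as implemented for the $m=2$ case in \cite{CV1}. The strategy is to build from $\Phi_0$ an auxiliary matrix function in the upper half plane, apply Cauchy's theorem together with the $\bigO(\zeta^{-3/2})$ decay inherited from the hypothesis, and reduce the vanishing of $\Phi_0$ to the vanishing of a manifestly non-negative integral over $\Gamma_1$.

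First I record two preliminary observations. Because $s,t_1,\ldots,t_{m-1}$ are real and $\theta(\zeta)$ is built from the principal branch of $\zeta^{1/2}$, one has $\overline{\theta(\bar\zeta)}=\theta(\zeta)$ for $\zeta\in\mathbb{C}\setminus(-\infty,0]$; in particular $\theta(x)\in\mathbb{R}$ on $\Gamma_1$ and the jump matrices on $\Gamma_1\cup\Gamma_3$ have real entries. The opening-angle condition $\frac{2m+1}{2m+3}\pi<\theta<\pi$ forces $\Re\theta(\zeta)\to-\infty$ along $\Gamma_2\cup\Gamma_4$, so the jumps on those rays converge exponentially to the identity; combined with the hypothesis $\Phi_0(\zeta)=\bigO(\zeta^{-3/4})$, this yields $L^2$ control of the boundary values of $\Phi_0$ along the whole contour. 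As a warm-up note that every jump matrix has determinant one, so $\det\Phi_0$ is entire and $\bigO(\zeta^{-3/2})$; Liouville already forces $\det\Phi_0\equiv 0$.

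I then introduce the auxiliary function
\[
H(\zeta):=\Phi_0(\zeta)\,\Phi_0(\bar\zeta)^*,\qquad\zeta\in\mathbb{C}_+\setminus\Gamma_2,
\]
which is analytic off $\Gamma_2$ and satisfies $H(\zeta)=\bigO(|\zeta|^{-3/2})$. Its boundary values on the real axis are $H_+(x)=\Phi_{0,-}(x)J_1(x)\Phi_{0,-}(x)^*$ on $\Gamma_1$ and $H_+(x)=\Phi_{0,-}(x)J_3\Phi_{0,-}(x)^*$ on $\Gamma_3$. Cauchy's theorem on a large upper-half-plane semi-disc, with the $|\zeta|^{-3/2}$ decay killing the arc contribution, gives
\[
\int_{\mathbb{R}}H_+(x)\,dx=\int_{\Gamma_2}\bigl(H_+-H_-\bigr)(\zeta)\,d\zeta.
\]
Running the parallel computation in $\mathbb{C}_-$ with $\widetilde H(\zeta)=\Phi_0(\bar\zeta)\Phi_0(\zeta)^*$ and taking the Hermitian part of the sum, the Schwarz identity $\overline{\theta(\bar\zeta)}=\theta(\zeta)$---which yields $J_4(\bar\zeta)^*=J_2(\zeta)^T$---makes the $\Gamma_2$ and $\Gamma_4$ jump contributions cancel against each other, leaving
\[
\int_{\Gamma_1}\Phi_{0,-}(x)\begin{pmatrix}2 & e^{-2\theta(x)}\\ e^{-2\theta(x)} & 2\end{pmatrix}\Phi_{0,-}(x)^*\,dx=0,
\]
with zero contribution from $\Gamma_3$ because $J_3+J_3^T=0$.

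The middle matrix above is strictly positive definite as soon as $e^{-2\theta(x)}<2$ along $\Gamma_1$. By Remark~\ref{remark contour} I may translate $\hat z_0$ to the right until this bound holds throughout $\Gamma_1$ (possible because $\theta(x)\to+\infty$ as $x\to+\infty$); the resulting vanishing lemma then transfers to arbitrary $\hat z_0$ via the analytic-continuation correspondence recorded in that remark. Positive definiteness forces $\Phi_{0,-}(x)\equiv 0$ on $\Gamma_1$, and analytic continuation across $\Gamma_1$ followed by propagation through the remaining jump relations gives $\Phi_0\equiv 0$ throughout $\mathbb{C}\setminus\Gamma$. The key technical step---and the reason the argument of \cite{CV1} extends from $m=2$ to all even $m$ without modification---is the algebraic identity $J_4(\bar\zeta)^*=J_2(\zeta)^T$ on $\Gamma_2$, which relies only on the realness of the coefficients of $\theta$ and is insensitive to its degree.
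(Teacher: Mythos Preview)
Your argument has a genuine gap at the step where you claim the $\Gamma_2$ and $\Gamma_4$ jump contributions cancel. The Schwarz identity $J_4(\bar\zeta)^*=J_2(\zeta)^T$ is correct, but it does not force the cancellation you assert. Since $\bar\zeta\in\Gamma_4$ whenever $\zeta\in\Gamma_2$, \emph{both} factors of $H(\zeta)=\Phi_0(\zeta)\Phi_0(\bar\zeta)^*$ jump across $\Gamma_2$, and one computes
\[
H_+-H_-\;=\;\Phi_{0}^{\rm I}(\zeta)\bigl[J_2(\zeta)J_2(\zeta)^T-I\bigr]\Phi_{0}^{\rm IV}(\bar\zeta)^*
\;=\;\Phi_{0}^{\rm I}(\zeta)\begin{pmatrix}0&e^{2\theta}\\ e^{2\theta}&e^{4\theta}\end{pmatrix}\Phi_{0}^{\rm IV}(\bar\zeta)^*,
\]
which is not zero. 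Call its integral over $\Gamma_2$ by $M$. Taking Hermitian conjugates of the Cauchy identity simply gives $\int_{\mathbb R}H_+^*\,dx=M^*$, so $\int_{\mathbb R}(H_++H_+^*)\,dx=M+M^*$; nothing forces $M$ to be anti-Hermitian. Your ``parallel computation'' with $\widetilde H(\zeta)=\Phi_0(\bar\zeta)\Phi_0(\zeta)^*$ does not supply an independent relation either: $\widetilde H$ is anti-analytic in $\mathbb C_-$, so Cauchy's theorem does not apply to $\widetilde H\,d\zeta$, and reformulating it in terms of the analytic function $\widetilde H^*$ merely reproduces the conjugate of the identity you already have. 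Thus the displayed equation
\[
\int_{\Gamma_1}\Phi_{0,-}(x)\begin{pmatrix}2 & e^{-2\theta(x)}\\ e^{-2\theta(x)} & 2\end{pmatrix}\Phi_{0,-}(x)^*\,dx=0
\]
is not established.

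The paper's proof avoids this difficulty altogether by first \emph{folding} the contour onto $\mathbb R$: one defines $A(\zeta)$ by multiplying $\Phi_0$ on the right by suitable constant-in-$\zeta$ factors in each sector so that $A$ is analytic in $\mathbb C\setminus\mathbb R$. Then $Q(\zeta)=A(\zeta)A(\bar\zeta)^*$ is genuinely analytic in the whole upper half plane, Cauchy's theorem gives $\int_{\mathbb R}Q_+\,d\xi=0$ with no off-axis terms, and adding the Hermitian conjugate (using $\overline{\theta_+}=\theta_-$ on $\mathbb R^-$) produces the rank-one positive weights $\left(\begin{smallmatrix}2&0\\0&0\end{smallmatrix}\right)$ and $\left(\begin{smallmatrix}2e^{-2\theta}&0\\0&0\end{smallmatrix}\right)$. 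This forces only the first column of $A_-$ to vanish; the paper then invokes the jump relations and a scalar RH argument to finish. Your shortcut of obtaining a full-rank positive-definite weight on $\Gamma_1$ would indeed be cleaner, but it rests on the unproved cancellation.
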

\begin{proof}
The proof of the vanishing lemma is almost the same as in \cite{CV1,
DKMVZ2} and goes as follows. We first collapse the jump contour
$\Gamma$ in the RH problem for $\Phi$ to the real line. We do this
by defining
\begin{align*}
    &A(\zeta)=
        \Phi_0(\zeta)
        \begin{pmatrix}
            0 & -1 \\
            1 & 0
        \end{pmatrix},& \mbox{for $0<\arg\zeta<\theta$,}
        \\[4ex]
        &A(\zeta)=
        \Phi_0(\zeta)
        \begin{pmatrix}
            1 & 0 \\
            e^{2\theta(\zeta)} & 1
        \end{pmatrix}
        \begin{pmatrix}
            0 & -1 \\
            1 & 0
        \end{pmatrix},& \mbox{for
        $\theta<\arg\zeta<\pi$,}\\[4ex]
        &A(\zeta)=
        \Phi_0(\zeta)
        \begin{pmatrix}
            1 & 0 \\
            -e^{2\theta(\zeta)} & 1
        \end{pmatrix},& \mbox{for
        $-\pi<\arg\zeta<-\theta$,}\\[4ex]
        &A(\zeta)=
        \Phi_0(\zeta), & \mbox{for $-\theta<\arg\zeta<0$.}
\end{align*}
Then one verifies using the jump relations for $\Phi_0$ that $A$ is
analytic in $\mathbb C\setminus\mathbb R$, and that it has the jump
properties
    \begin{align}
        \label{RHP A: b1}
        A_+(\zeta)&=A_-(\zeta)\begin{pmatrix}
            1 & -e^{2\theta_+(\zeta)} \\
            e^{2\theta_-(\zeta)} & 0
        \end{pmatrix},&\mbox{for $\zeta\in\mathbb R_-$,}\\[1ex]
        \label{RHP A: b2}
        A_+(\zeta)&=A_-(\zeta)\begin{pmatrix}
            e^{-2\theta(\zeta)} & -1 \\
            1 & 0
        \end{pmatrix},&\mbox{for $\zeta\in\mathbb R_+$.}
    \end{align}
    As $\zeta\to\infty$, the behavior of $A$ follows from the asymptotics for $\Phi_0$ and the fact that the
    exponentials in
    the definition of $A$ are uniformly bounded. We have
    \begin{equation}\label{RHP A:c}
    A(\zeta)=\bigO(\zeta^{-3/4}),\qquad \mbox{ as
    $\zeta\to\infty$},\end{equation}
    uniformly for $\zeta\in\mathbb C\setminus\mathbb
    R$.
Then we define $Q(\zeta)=A(\zeta)A^H(\bar\zeta)$, where $A^H$ is the
Hermitian conjugate of the matrix $A$. $Q$ is clearly analytic in
the upper half plane, because $A$ is analytic in the upper and lower
half plane, and it is  continuous up to $\mathbb{R}$. As
$\zeta\to\infty$, we have $Q(\zeta)=\bigO(\zeta^{-3/2})$, which
implies that
\begin{equation}\int_{\mathbb{R}}Q_+(\xi)d\xi=0\end{equation}
by Cauchy's theorem, and using (\ref{RHP A: b1})-(\ref{RHP A: b2})
we obtain
\begin{multline}\label{eqQ}
    \int_{\mathbb{R}^-}A_-(\xi)
    \begin{pmatrix}
        1 & -e^{2\theta_+(\xi;s,t_1, \ldots, t_{m-1})} \\
        e^{2\theta_-(\xi;s,t_1, \ldots, t_{m-1})} & 0
    \end{pmatrix}A^H_-(\xi)d\xi\\+\int_{\mathbb{R}^+}A_-(\xi)
        \begin{pmatrix}
            e^{-2\theta(\xi;s,t_1, \ldots, t_{m-1})} & -1 \\
            1 & 0
        \end{pmatrix}
        A^H_-(\xi)d\xi=0.
\end{multline}
For real values of $s,t_1, \ldots, t_{m-1}$, we have
\[\overline{\theta_+(\xi;s,t_1, \ldots, t_{m-1})}=\theta_-(\xi;s,t_1,
\ldots, t_{m-1}),\qquad \mbox{ for $\xi<0$,}\] and adding
(\ref{eqQ}) to its Hermitian conjugate, we obtain
\begin{equation}
    \int_{\mathbb{R}^-}A_-(\xi)
    \begin{pmatrix}
        2 & 0 \\
        0 & 0
    \end{pmatrix}A^H_-(\xi)d\xi+\int_{\mathbb{R}^+}A_-(\xi)
        \begin{pmatrix}
            2e^{-2\theta(\xi;s,t_1, \ldots, t_{m-1})} & 0 \\
            0 & 0
        \end{pmatrix}
        A^H_-(\xi)d\xi=0.
\end{equation}
Since $e^{-2\theta(\xi;s,t_1, \ldots, t_{m-1})}>0$, this implies
that the first column of $A_-$ is identically zero (because it is
continuous). The jump conditions (\ref{RHP A: b1})-(\ref{RHP A: b2})
can be used to prove that the second column of $A_+$ vanishes as
well.

Writing out the jump relations for the entries of $A$ for which we
have not yet proved that they vanish, the RH problem for $A$
decouples into two scalar RH problems. The same argument as in
\cite[Step 3 of Section 5.3]{DKMVZ2} shows that those scalar RH
problems have only the zero solution, and that $A\equiv 0$.
Consequently we have that $\Phi_0\equiv 0$, which proves the
vanishing lemma.
\end{proof}

As a consequence of the vanishing lemma, the RH problem for $\Phi$,
and thus also for $\Psi$ (one can invert the transformation defined
by (\ref{def Phi})), has a solution for real values of the
parameters $s,t_1,\ldots, t_{m-1}$.

It follows from the general theory of RH problems that the subset of
$\mathbb C^m$ of values $(s,t_1, \ldots, t_{m-1})$ where the RH
problem is solvable, is an open set, and this implies in particular
that, for any $(s,t_1, \ldots, t_{m-1})\in\mathbb R^m$, there exists
a neighborhood in $\mathbb C^m$ such that the RH problem is solvable
also in this neighborhood. Moreover if this neighborhood is chosen
sufficiently small, condition (\ref{RHP Psi:c}) is valid uniformly,
see \cite{CV1}, and $\Psi$, $q$, and $h$ depend analytically on each
of the variables $s,t_1, \ldots, t_{m-1}$. Values of $(s,t_1,\ldots,
t_{m-1})$ where the RH problem is not solvable, correspond to poles
of $q$.

\subsection{Relation between the RH problem, the Painlev\'e I
hierarchy, and the KdV hierarchy}

We will show that the function $q$ appearing in the asymptotic
expansion (\ref{RHP Psi:c}) solves equation $P_{\rm I}^m$, and that
it also solves the equations in the KdV hierarchy. This follows from
Lax pair arguments which are rather standard, see e.g.\ \cite{FN,
FIKN, IN} in general and \cite{KudrS, Takasaki} for the Lax pair
corresponding to the first Painlev\'e hierarchy. We recall the arguments briefly for the
reader's convenience.

\subsubsection{Lax pair in $s$, $\zeta$, and $t_k$} Since $\Psi$ is
differentiable in $s$, we can define
\begin{equation}
\label{def L A}L(\zeta)=\Psi_s(\zeta)\Psi(\zeta)^{-1},\qquad
A(\zeta)=\Psi_\zeta(\zeta)\Psi(\zeta)^{-1}.
\end{equation}
The jump matrices for $\Psi$ do not depend on $\zeta$ and $s$,
so $L$ and $A$ are entire functions in the complex plane, and because
of the asymptotics (\ref{RHP Psi:c}), they are polynomials in
$\zeta$. For $L$, one deduces directly from (\ref{RHP Psi:c}) that
\begin{equation}
L(\zeta)=\begin{pmatrix}0&1\\
\zeta+(h_s+q)&0\end{pmatrix}=\begin{pmatrix}0&1\\
\zeta+2q&0\end{pmatrix},
\end{equation}
where it was used in the latter equation that $h_s=q$. This follows
by developing $\Psi_s\Psi^{-1}$ as $\zeta\to\infty$ and imposing
that the $12$-entry of the $\zeta^{-1}$-term is zero. For the
$\zeta$-derivative, we use (\ref{RHP Psi:c}) to conclude that $A$ is
a polynomial of degree $m+1$,
\begin{equation}\label{A polynomial}
A(\zeta)=\sum_{j=0}^{m+1}A_j \zeta^j,
\end{equation}
where the $A_j$'s depend on $s, t_1, \ldots, t_{m-1}$ but not on
$\zeta$. Let us take a closer look at the $12$-entry
$\beta(\zeta):=A_{12}(\zeta)$: by (\ref{RHP Psi:c}) and (\ref{def L
A}), it has the form
\begin{equation}\label{beta poly}
\beta(\zeta)=\beta^{(m+1)}(\zeta)+\sum_{k=1}^{m-1}t_k\beta^{(k)}(\zeta),
\end{equation}
where $\beta^{(k)}$ is a polynomial of degree $k-1$ which can be written in the form
\begin{equation}
\beta^{(k)}(\zeta)=\sum_{j=0}^{k-1}\frac{\mathcal L_{k-j-2}}{2}\zeta^j,
\end{equation}
where $\mathcal L_j$ is independent of the value of $k$ in the above
formula, and $\mathcal L_{-1}=4, \mathcal L_0=-4q$. Similar formulas
can be obtained for the other entries but are not needed.

Now, as in \cite{Takasaki}, write
\begin{equation}
A(\zeta)=\begin{pmatrix}\alpha&\beta\\ \gamma &-\alpha\end{pmatrix}
\end{equation}
(it is easily verified that $\Tr A\equiv 0$). Then $\Psi_{\zeta
s}=\Psi_{s\zeta}$ implies the compatibility condition
\begin{equation}
L_\zeta-A_s+[L,A]=0,
\end{equation}
and if we write this down entry-wise, we find
\begin{align}
&\label{alpha}\alpha_s+\beta(\zeta+2q)-\gamma=0,\\
&\beta_s+2\alpha=0,\\
&\gamma_s-1-2\alpha(\zeta+2q)=0.
\end{align}
We solve the second equation for $\alpha$ and
then the first for $\gamma$, and we substitute their values in the
third equation. This gives
\begin{equation}\label{eq beta}
\frac{1}{2}\beta_{sss}-2\beta_s(\zeta+2q)-2q_s\beta+1=0.
\end{equation}
The degree $m+1$ term in (\ref{eq beta}) is trivial, and the degree
$m$ term reads $4q_s+\mathcal L_{0,s}=0$, which we knew already
since $\mathcal L_0=-4q$. From the term of degree $j$ with $1\leq
j\leq m-1$, we get
\begin{equation}\label{eq betaj}
\frac{1}{2}\mathcal L_{m-j-1,sss}-2\mathcal L_{m-j,s}-4q\mathcal
L_{m-j-1,s}-2q_s\mathcal L_{m-j-1}=0,
\end{equation}
and the constant term in (\ref{eq beta}) gives
\begin{equation}\label{eq beta0}
1+\frac{1}{4}\mathcal L_{m-1,sss}-2q\mathcal L_{m-1}'-q_s\mathcal
L_{m-1}+\sum_{j=1}^{m-1}t_j\left(\frac{1}{4}\mathcal
L_{j-2,sss}-2q\mathcal L_{j-2}'-q_s\mathcal L_{j-2}\right)=0,
\end{equation}
or
\begin{equation}\label{diffPIm}1+\frac{d}{ds}\mathcal L_m+\sum_{j=1}^{m-1}t_j\frac{d}{ds}\mathcal L_{j-1}=0.\end{equation} Integrating this equation gives the
$P_I^m$ equation (\ref{PIm}). One shows using the asymptotic
behavior of solutions to the Schr\"odinger equation $\Psi_s=L\Psi$
that the constants of integration when integrating (\ref{diffPIm})
and (\ref{eq betaj}) are zero. This proves that $q$ solves the
$P_I^m$ equation (\ref{PIm}).

\medskip

The jump matrices for $\Psi$ are also independent of $t_1,\ldots,
t_{m-1}$, and consequently $B^{(k)}=\Psi_{t_k}\Psi^{-1}$ is a
polynomial of degree $k+1$ in $\zeta$. Exploiting the compatibility
of the $t_k$-derivative with the $s$-derivative, an analogous
argument as before leads to the time flow
\begin{equation}\label{KdVk}
q_{t_k}+\frac{1}{2k+1}\frac{d}{ds}\mathcal L_k=0.
\end{equation}

\subsection{Asymptotics for $q$}\label{section: asymptotics}

We will now analyze the RH problem for $\Psi$ asymptotically as
$s\to\pm\infty$. We will use the
Deift/Zhou steepest descent method to obtain asymptotics for $\Psi$.
For $m=2$, this analysis has been done in \cite{CV1}, see also
\cite{Kapaev, Kapaev2}. For $m>2$, the general approach remains the same, but
in particular the construction of the $g$-function and the contour
deformation are more delicate.

\subsubsection{Re-scaling of $\Psi$} Until now, we have always
considered the RH problem for $\Psi$ corresponding to $\hat z_0=0$
and $\theta=\frac{2m+2}{2m+3}\pi$, see Remark \ref{remark contour}.
For the asymptotic analysis of the RH problem, we will need other
values of $\hat z_0$ and $\theta$ which we will specify later.

Define
\begin{equation}\label{def Y}
Y(\zeta)=\begin{pmatrix}1&0\\h&0\end{pmatrix}\Psi(|s|^{\frac{1}{m+1}}\zeta;s,t_1,
\ldots, t_{m-1}).
\end{equation}

\subsubsection*{RH problem for $Y$:}

\begin{itemize}
    \item[(a)] $Y$ is analytic in $\mathbb{C}\setminus\Gamma(z_0, \theta)$,
    where $z_0=|s|^{-\frac{1}{m+1}}\hat z_0$.
    \item[(b)] We have the jump conditions
    \begin{align}
        Y_+(\zeta)&=Y_-(\zeta)
            \begin{pmatrix}
                1 & 1 \\
                0 & 1
            \end{pmatrix},&& \mbox{for $\zeta\in\Gamma_1$,}\\[1ex]
        Y_+(\zeta)&=Y_-(\zeta)
            \begin{pmatrix}
                1 & 0 \\
                1 & 1
            \end{pmatrix},&& \mbox{for
            $\zeta\in\Gamma_2\cup\Gamma_4$,}\\[1ex]
        Y_+(\zeta)&=Y_-(\zeta)
            \begin{pmatrix}
                0 & 1 \\
                -1 & 0
            \end{pmatrix},&& \mbox{for $\zeta\in\Gamma_3$.}
    \end{align}
    \item[(c)] As
    $\zeta\to\infty$, we have
    \begin{equation}\label{RHP Y:c}
        Y(\zeta)=\left(I+|s|^{-\frac{1}{m+1}}A_1\zeta^{-1}+\bigO(\zeta^{-2})\right)|s|^{-\frac{1}{4m+4}\sigma_3}\zeta^{-\frac{\sigma_3}{4}}N
        e^{-|s|^{\frac{2m+3}{2m+2}}\hat\theta(\zeta;s,t_1, \ldots,
        t_{m-1})\sigma_3}.
    \end{equation}
    where
        \begin{align}\label{def hattheta}
        &\hat\theta(\zeta;s,t_1,\ldots, t_{m-1})=|s|^{-\frac{2m+3}{2m+2}}\theta(|s|^{\frac{1}{m+1}}\zeta;s,t_1,\ldots,
        t_{m-1})\\
        &\label{hat theta 0}\qquad=\frac{4}{2m+3}\zeta^{\frac{2m+3}{2}}+\sgn(s)\zeta^{1/2}+\sum_{j=1}^{m-1}\frac{4}{2j+1}t_j|s|^{\frac{j-m-1}{m+1}}\zeta^\frac{2j+1}{2}.
        \end{align}
\end{itemize}
The matrix $A_1$ in (\ref{RHP Y:c}) is the same as in (\ref{RHP
Phi:c}), so (\ref{qA1}) holds.

\subsubsection{Construction of the $g$-function}
We proceed with our analysis in the case where $t_1, \ldots, t_{m-1}=0$. Straightforward modifications described in Remark \ref{remark tj} allow us to treat also the general case where the $t_j$'s are fixed. We search a
$g$-function $g=g(\zeta;s)$ of the form
\begin{equation}\label{def g}
g(\zeta)=(\zeta-z_0)^{3/2}p\left(\frac{\zeta}{z_0}\right),
\end{equation}
where $p(z)$ is a polynomial of degree $m$. Its coefficients and $z_0$ are uniquely determined by the
condition
\begin{equation}\label{as g}
g(\zeta)=\hat\theta(\zeta)+\bigO(\zeta^{-1/2}),\qquad \mbox{ as
$\zeta\to\infty$.}
\end{equation} We have
\begin{equation}\label{hat theta 01}
\hat\theta(\zeta)=\frac{4}{2m+3}\zeta^{\frac{2m+3}{2}}+\sgn(s)\zeta^{1/2},
\end{equation}
and because
\begin{equation}
(\zeta-z_0)^{-3/2}=\zeta^{-3/2}\left(1+\sum_{j=1}^{\infty}\frac{(2j+1)!!}{2^j.j!}z_0^j\zeta^{-j}\right),\qquad
\mbox{ as $\zeta\to\infty$},
\end{equation}
(\ref{as g}) requires us to take $p$ of the form
\begin{equation}
p(z)=\frac{4}{2m+3}z_0^{m}\sum_{j=0}^{m}c_jz^{m-j},\qquad
c_j=\frac{(2j+1)!!}{2^j.j!}.
\end{equation}
The missing condition in order to have (\ref{as g}) is
\begin{equation}\label{z0}
z_0=-\sgn(s)\left(\frac{2^{m-1}(m+1)!}{(2m+1)!!}\right)^{\frac{1}{m+1}}.
\end{equation}
With this choice of $g$, we have
\begin{equation}\label{asymptotics exp theta g}
        e^{|s|^{\frac{2m+3}{2m+2}}(g(\zeta)-\hat\theta(\zeta))\sigma_3}=
        I+\sum_{k=1}^\infty d_k\sigma_3^k\zeta^{-k/2},\qquad\mbox{ as $\zeta\to\infty$,}
\end{equation}
where the coefficients $d_k$ can be calculated but are not important. Since the
determinant of the left hand side
 of (\ref{asymptotics exp theta g}) is $1$, we have
\begin{equation}\label{definition: d2}
    d_2=\frac{1}{2}d_1^2.
\end{equation}

\begin{remark}\label{remark tj}
If the $t_j$'s do not vanish, we need to modify $p$ and $z_0$, but
 (\ref{as g}) still determines the coefficients $c_j$
of the polynomial
\begin{equation}
p(z)=\frac{4}{2m+3}z_0^{m}\sum_{j=0}^{m}c_j z^{m-j}
\end{equation}
uniquely. We don't need their explicit form, it is sufficient to
know that
\begin{align}&\label{cjxt}
c_j=\frac{(2j+1)!!}{2^j.j!}+\bigO(|s|^{-\frac{2}{m+1}}),&
\mbox{ as $|s|\to\infty$},\\
&\label{z0t}z_0=-\sgn(s)\left(\frac{2^{m-1}(m+1)!}{(2m+1)!!}\right)^{\frac{1}{m+1}}+\bigO(|s|^{-\frac{2}{m+1}}),&\mbox{
as $|s|\to\infty$.}
\end{align}
To prove Theorem \ref{theorem: PIm}, we only deal with the case
where $t_1=\ldots=t_{m-1}=0$, but the entire analysis done below can
be easily generalized as long as the $t_j$'s remain bounded. Formula
(\ref{as q2}) will then follow from (\ref{as q3}) together with
(\ref{z0t}).
\end{remark}

\begin{proposition}\label{prop g}We have
\begin{align}
&\label{ineq1}g(\zeta)>0,&\mbox{ for $\zeta>z_0$,}\\
&\label{ineq2}\Im g_+'(\zeta)>0,&\mbox{ for $\zeta<z_0$.}
\end{align}
\end{proposition}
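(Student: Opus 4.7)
The plan is to reduce both inequalities to a single positivity statement about a polynomial factor of $g'$.

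First, differentiating (\ref{def g}) directly yields
\begin{equation*}
g'(\zeta) = (\zeta - z_0)^{1/2}\,R(\zeta), \qquad R(\zeta) := \tfrac{3}{2}\,p(\zeta/z_0) + \tfrac{\zeta - z_0}{z_0}\,p'(\zeta/z_0),
\end{equation*}
so $g'$ factors as $(\zeta-z_0)^{1/2}$ times a polynomial $R$ of degree $m$ in $\zeta$. Plugging in the coefficients $c_j = (2j+1)!!/(2^j j!)$ and exploiting the recursion $c_j = \tfrac{2j+1}{2j}\,c_{j-1}$, a short algebraic manipulation collapses the coefficients of $R$ to the clean form
\begin{equation*}
R(\zeta) = \sum_{j=0}^m a_j\,z_0^j\,\zeta^{m-j}, \qquad a_j = \frac{(2j-1)!!}{2^{j-1}\,j!},
\end{equation*}
with the convention $(-1)!! = 1$, so $a_0 = 2$ and every $a_j > 0$. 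Note that $a_j/2$ is exactly the $j$-th Taylor coefficient of $(1-u)^{-1/2}$ at $u=0$.

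The key step is then to prove that $R(\zeta) > 0$ for every $\zeta \in \mathbb{R}$. Writing $w = \zeta/z_0$ and using that $m$ is even (so $z_0^m > 0$), the sign of $R$ equals the sign of $\sum_{j=0}^m a_j w^{m-j}$; after the further substitution $u = 1/w$ this reduces to showing
\begin{equation*}
T_m(u) := \sum_{j=0}^m \frac{(2j-1)!!}{2^j\,j!}\,u^j > 0 \qquad \text{for all } u \in \mathbb{R}.
\end{equation*}
For $u \geq 0$ this is immediate from positivity of the coefficients. For $u < 0$ I will invoke Taylor's theorem with the Lagrange remainder applied to $f(u) = (1-u)^{-1/2}$ on $[u,0]$: one has $(1-u)^{-1/2} = T_m(u) + \tfrac{f^{(m+1)}(\xi)}{(m+1)!}\,u^{m+1}$ for some $\xi \in (u,0)$. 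Because $f^{(m+1)}(\xi) = \tfrac{(2m+1)!!}{2^{m+1}}(1-\xi)^{-(m+3/2)} > 0$ and $u^{m+1} < 0$ (using that $m$ is even), the remainder is strictly negative, forcing $T_m(u) > (1-u)^{-1/2} > 0$.

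With $R > 0$ on $\mathbb{R}$ in hand, the two inequalities follow at once. For (\ref{ineq1}): on $(z_0,\infty)$ we have $(\zeta-z_0)^{1/2} > 0$, hence $g'(\zeta) > 0$, and since $g(z_0) = 0$ integration yields $g(\zeta) > 0$. For (\ref{ineq2}): for $\zeta < z_0$ the principal branch gives $(\zeta-z_0)_+^{1/2} = i(z_0-\zeta)^{1/2}$, so $g_+'(\zeta) = i(z_0-\zeta)^{1/2} R(\zeta)$ and $\Im g_+'(\zeta) = (z_0-\zeta)^{1/2} R(\zeta) > 0$. The main obstacle I expect is the combinatorial collapse identifying $R$ (up to the factor $2$) with a truncated Taylor series of $(1-u)^{-1/2}$: without this reformulation the positivity of $R$ on negative real arguments is opaque, but once in place the Lagrange-remainder argument is essentially automatic, and the parity of $m$ enters crucially both through $z_0^m > 0$ and through $u^{m+1} < 0$ for $u < 0$.
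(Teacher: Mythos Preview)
Your argument is correct and is genuinely different from the paper's. The paper proves the positivity of the polynomial factor $q(z)$ of $g'$ (your $R$, up to the substitution $z=\zeta/z_0$) by splitting $\mathbb{R}$ into the three pieces $(-\infty,-1]$, $[-1,-1/2]$, $[-1/2,\infty)$ and estimating the alternating sums on each piece by hand; it then combines this with a separate, similar interval-by-interval analysis of $p$ to obtain (\ref{ineq1}). You instead recognise that, after the change of variables $u=z_0/\zeta$, $R$ is (twice) the degree-$m$ Taylor polynomial $T_m$ of $(1-u)^{-1/2}$ at $0$, and then the Lagrange remainder immediately forces $T_m(u)>(1-u)^{-1/2}>0$ for $u<0$ because $m$ is even. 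This is cleaner and more conceptual: it replaces several ad hoc estimates by a single sign observation, and it also lets you derive (\ref{ineq1}) in one stroke by integrating $g'>0$ from $g(z_0)=0$, bypassing the paper's separate analysis of $p$.

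Two small points worth tidying. First, the substitution $u=1/w$ does not cover $w=0$ (i.e.\ $\zeta=0$); just note $R(0)=a_m z_0^m>0$ directly. Second, it would be worth stating explicitly that the coefficient identity $a_j=(2j-1)!!/(2^{j-1}j!)$ follows from combining $(m-k+3/2)c_k-(m-k+1)c_{k-1}=\tfrac{2m+3}{4k}c_{k-1}$ (a one-line consequence of $c_k=\tfrac{2k+1}{2k}c_{k-1}$) with the prefactor $4/(2m+3)$ in $p$; your phrase ``a short algebraic manipulation collapses the coefficients'' is accurate but the reader will appreciate seeing the cancellation of $2m+3$ made explicit, since this is what makes the Taylor-polynomial identification possible.
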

\begin{proof}
For the first equality, we first prove that $p(z)>0$ for
$z\in(-\infty,-3/2]\cup[-1,+\infty)$. Since the coefficients of $p$
are positive, this is clear for $z$ positive. For $-1<z<0$, we have
\[
p(z)=\frac{4}{2m+3}z_0^{m}\left((c_m-c_{m-1}|z|)+z^2(c_{m-2}-c_{m-3}|z|)+\ldots
+ z^{m-2}(c_{2}-c_1|z|) + c_0z^m\right),
\]
and all the terms in this expression are positive since the $c_j$'s
increase with $j$. For $z<-3/2$, a similar argument shows that
$p(z)>0$ (the terms in the sum are alternating and their absolute
value increases with the degree). This implies that $g(\zeta)>0$ for
$\frac{\zeta}{z_0}\in (-\infty,-3/2]\cup[-1,+\infty]$. In the case
where $z_0>0$ (or $s<0$), this implies (\ref{ineq1}), for $z_0<0$,
we still need an estimate for $\frac{\zeta}{z_0}\in (-3/2,-1)$, or
$\zeta\in(-z_0,-\frac{3}{2}z_0)$. We already know that $g$ is
positive at both endpoints of the interval, so it suffices to prove
that $g(\zeta)$ is monotonic on $(-z_0,-\frac{3}{2}z_0)$.

\medskip

Write $g'(\zeta)=(\zeta-z_0)^{1/2}q(\frac{\zeta}{z_0})$ with
\[q(z)=2z_0^{m}\left(z^{m}+\sum_{j=1}^{m}b_j
z^{m-j}\right),\qquad b_j=\frac{(2j-1)!!}{2^j j!}.\] We will show
that $q(z)$ is strictly positive for all $z\in\mathbb R$, which
implies (\ref{ineq2}), and also the fact that $g(\zeta)$ is
monotonic on $(-z_0,-\frac{3}{2}z_0)$, which completes the proof of
the first inequality.

\medskip

For $z\in(-\infty,-1]\cup[-\frac{1}{2},+\infty)$, it is not
difficult to see that $q(z)$ is positive. Indeed, the terms in the sum are either positive or alternating and monotonic with increasing degree. For
$-1<z<-\frac{1}{2}$, write
\begin{multline}
\frac{z_0^{-m}}{2}z^{2-m}q(z)=\left(z^2+\frac{1}{2}z+\frac{3}{8}\right)+\sum_{k=2}^{\frac{m}{2}}
\left(b_{2k-1}z^{-2k+3}+b_{2k}z^{-2k+2}\right).
\end{multline}
The first part is bigger than $\frac{5}{16}$, and each part of the
sum reaches its minimal value on $[-1,-1/2]$ at $-1$. Since the
$b_j$'s are decreasing, rearranging the alternating terms, we obtain
the estimate
\[
\frac{z_0^{-m}}{2}z^{2-m}q(z)\geq\frac{5}{16}+\sum_{k=2}^{\frac{m}{2}}
\left(-b_{2k-1}+b_{2k}\right)\geq \frac{5}{16}-\frac{5}{16}+b_m>0.
\]
\end{proof}

The above result enables us to apply the Cauchy-Riemann conditions,
which leads to the following corollary.
\begin{corollary}\label{corollary}
There exists a $\theta>0$ such that $\Re g(\zeta)<0$ if
$\arg(\zeta-z_0)=\pi\pm\theta$.
\end{corollary}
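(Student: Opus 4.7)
The plan is to transfer the inequality $\Im g'_+>0$ of Proposition~\ref{prop g} into a sign for $\Re g$ just off the cut via the Cauchy--Riemann equations, and then to propagate this sign along the full rays $\arg(\zeta-z_0)=\pi\pm\theta$ using the asymptotics of $g$ near $z_0$ and at infinity.

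In polar coordinates $\zeta=z_0+re^{i\phi}$, the Cauchy--Riemann relations yield $\partial_\phi\Re g=-r\,\Im\bigl(e^{i\phi}g'(\zeta)\bigr)$, which at $\phi=\pi^-$ equals $r\,\Im g'_+(z_0-r)>0$ by (\ref{ineq2}). Since the representation $g(\zeta)=(\zeta-z_0)^{3/2}p(\zeta/z_0)$ makes $g_+(z_0-r)$ purely imaginary, we have $\Re g_+(z_0-r)=0$, so this monotonicity forces $\Re g(z_0+re^{i(\pi-\theta)})<0$ for each fixed $r>0$ and every sufficiently small $\theta>0$. The Schwarz symmetry $g(\bar\zeta)=\overline{g(\zeta)}$ reduces the ray $\arg(\zeta-z_0)=-\pi+\theta$ to the upper case.

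To promote this pointwise-in-$r$ statement to uniformity in $r$, I would check the two extreme regimes. For $r\to 0$, $g(\zeta)\sim p(1)(\zeta-z_0)^{3/2}$ with $p(1)>0$, yielding $\Re g(z_0+re^{i(\pi-\theta)})\sim-p(1)r^{3/2}\sin(3\theta/2)<0$ for $\theta\in(0,2\pi/3)$. For $r\to\infty$, (\ref{as g}) and (\ref{hat theta 0}) give the leading term $g\sim\tfrac{4}{2m+3}\zeta^{(2m+3)/2}$, and since $m$ is even a direct computation of $\cos((2m+3)(\pi-\theta)/2)$ shows that $\Re g\sim-\tfrac{4}{2m+3}r^{(2m+3)/2}\sin((2m+3)\theta/2)<0$ for $\theta\in(0,2\pi/(2m+3))$.

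The main obstacle is ruling out zero crossings of $\Re g$ at intermediate $r$. Because $\Re g$ is real-analytic on $\mathbb{C}\setminus(-\infty,z_0]$, the set $\{\Re g=0\}$ is a locally finite union of real-analytic arcs whose infinite branches are tangent to the $2m+3$ asymptotic rays $\arg\zeta=(2k+1)\pi/(2m+3)$. The ray at angle $\pi$ (the case $k=m+1$) is realized by the cut itself; all other branches are separated from $\arg\zeta=\pi$ by angular distance at least $2\pi/(2m+3)$. Combined with the local Cauchy--Riemann monotonicity from step one, this shows that for $\theta>0$ small enough the rays $\arg(\zeta-z_0)=\pi\pm\theta$ lie entirely in the connected component of $\{\Re g\neq 0\}$ adjoining the upper (respectively lower) side of the cut, on which $\Re g<0$.
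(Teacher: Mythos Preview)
Your approach is essentially the same as the paper's: the paper's entire proof is the single sentence ``The above result enables us to apply the Cauchy--Riemann conditions,'' i.e., exactly your step of converting $\Im g'_+>0$ on the cut into $\partial_\phi\Re g|_{\phi=\pi^-}>0$ and hence $\Re g<0$ just below the cut, together with the Schwarz symmetry $g(\bar\zeta)=\overline{g(\zeta)}$. Your additional work on uniformity in $r$ (local expansion at $z_0$, asymptotics at infinity, and the structure of the nodal set for intermediate $r$) fills in detail that the paper leaves implicit; a slightly more direct way to handle the intermediate range is a compactness argument using the strict positivity and continuity of $r\,\Im g'_+(z_0-r)$ on $[a,b]\subset(0,\infty)$, but your level-set picture leads to the same conclusion.
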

We still have the freedom to choose the values of $\theta$ and $z_0$ that determine the jump contour $\Gamma(z_0,\theta)$ for $Y$. We take $z_0$ as in (\ref{z0}) and $\theta$ such that the above corollary holds.

\subsubsection{Normalization of the RH problem} Define
\begin{equation}\label{def S}S(\zeta)=\begin{pmatrix}1&0\\d_1|s|^{\frac{1}{2m+2}}&1\end{pmatrix}Y(\zeta)e^{|s|^{\frac{2m+3}{2m+2}}g(\zeta)\sigma_3},\end{equation}
so that we have
\subsubsection*{RH problem for $S$:}
\begin{itemize}
    \item[(a)] $S$ is analytic in $\mathbb{C}\setminus\Gamma(z_0,\theta)$.
    \item[(b)]  For $\zeta\in\Gamma(z_0,\theta)$,
    \begin{align}\label{RHP S:b1}
        S_+(\zeta)&=S_-(\zeta)
            \begin{pmatrix}
                1 & e^{-2|s|^{\frac{2m+3}{2m+2}}g(\zeta)} \\
                0 & 1
            \end{pmatrix},&& \mbox{for $\zeta\in\Gamma_1$,}\\[1ex]
        S_+(\zeta)&=S_-(\zeta)
            \begin{pmatrix}
                1 & 0 \\
                e^{2|s|^{\frac{2m+3}{2m+2}}g(\zeta)} & 1
            \end{pmatrix},&& \mbox{for
            $\zeta\in\Gamma_2\cup\Gamma_4$,}\\[1ex]
        S_+(\zeta)&=S_-(\zeta)\label{RHP S:b3}
            \begin{pmatrix}
                0 & 1 \\
                -1 & 0
            \end{pmatrix},&& \mbox{for $\zeta\in \Gamma_3$.}
    \end{align}
    \item[(c)] As $\zeta\to\infty$, $S$ behaves like
    \begin{equation}\label{RHP S: c}
        S(\zeta)=
        \left[I+B_1\zeta^{-1}+\bigO(\zeta^{-2})\right]
            |s|^{-\frac{\sigma_3}{4m+4}}\zeta^{-\frac{\sigma_3}{4}} N,
    \end{equation}
    and $q$ is given by
\begin{equation}\label{qB1} q=|s|^{\frac{2}{m+1}}B_{1,12}^2-2|s|^{\frac{1}{m+1}}B_{1,11}.
\end{equation}

\end{itemize}
As $s\to\pm\infty$, the jump matrices for $S$ tend to constant
matrices except near $z_0$: indeed on $\Gamma_1,\Gamma_2,$ and
$\Gamma_4$ we have exponentially fast convergence to the identity
matrix by Proposition \ref{prop g} and Corollary \ref{corollary},
and on $\Gamma_3$, the jump matrix is identically equal to
$\begin{pmatrix}0&1\\-1&0\end{pmatrix}$.

\subsubsection{Global parametrix}

In the limit $s\to\pm\infty$, if we ignore a small neighborhood of
$z_0$, the RH problem for $S$ reduces to a RH problem with a jump
only on $(-\infty,z_0)$. We can explicitly construct a solution
$P^{(\infty)}$ to this RH problem.

\subsubsection*{RH problem for $P^{(\infty)}$:}
\begin{itemize}
    \item[(a)] $P^{(\infty)}$ is analytic in $\mathbb{C}\setminus (-\infty,z_0]$.
    \item[(b)]  We have \begin{equation}
        P^{(\infty)}_+(\zeta)=P^{(\infty)}_-(\zeta)
                        \begin{pmatrix}
                0 & 1 \\
                -1 & 0
            \end{pmatrix},\qquad \mbox{for $\zeta\in (-\infty,z_0)$.}
    \end{equation}
    \item[(c)] As $\zeta\to\infty$,
    \begin{equation}\label{RHP Pinfty: c1}
        P^{(\infty)}(\zeta)=\left(I+\bigO(\zeta^{-1})\right)|s|^{-\frac{1}{4m+4}\sigma_3}\zeta^{-\frac{\sigma_3}{4}}N.
    \end{equation}
\end{itemize}
This RH problem can easily be solved explicitly: if we take
\begin{equation}\label{def Pinfty}
    P^{(\infty)}(\zeta)=
    |s|^{-\frac{\sigma_3}{4m+4}}(\zeta- z_0)^{-\frac{\sigma_3}{4}}N,
\end{equation}
one verifies that $P^{(\infty)}$ satisfies the required conditions.
The asymptotic condition (\ref{RHP Pinfty: c1}) can be specified as
\begin{equation}\label{RHP Pinfty: c}
        P^{(\infty)}(\zeta)=\left(I+\frac{z_0}{4\zeta}\sigma_3+\bigO(\zeta^{-2})\right)|s|^{-\frac{\sigma_3}{4m+4}}\zeta^{-\frac{\sigma_3}{4}}N.
    \end{equation}
 We will show that $P^{(\infty)}$ determines the leading order
asymptotics of $S$ and thus indirectly of the $P_I^m$ solution $q$,
by (\ref{qB1}). Therefore we first need to know that there exists a
local parametrix near $z_0$ which matches with the global
parametrix.

\subsubsection{Local parametrix near $z_0$} Let us fix a small neighborhood $U$ of $z_0$, for example a
small disk. Given $m$ and $\sgn(s)$, we take $U$ fixed for $|s|$
sufficiently large. Near $z_0$, the $g$-function vanishes like
$c(\zeta-z_0)^{3/2}$. Following a well understood procedure, one can
explicitly construct a local parametrix in $U$ in terms of the Airy
function and its derivative. We refer to \cite{CV1} for the explicit
construction (and to \cite{Deift, DKMVZ2, DKMVZ1} for similar
constructions). The only thing that we need here, is the existence
of a local parametrix which satisfies the RH problem

\subsubsection*{RH problem for $P$:}
\begin{itemize}
\item[(a)]$P$ is analytic in $\overline U\setminus \Gamma(z_0,\theta)$,
\item[(b)]for $\zeta\in \Gamma(z_0,\theta)\cap U$, $P$ satisfies exactly the same jump conditions than $S$ (see (\ref{RHP S:b1})-(\ref{RHP S:b3})),
\item[(c)]for $\zeta\in\partial U$, we have
\begin{equation}\label{matching}
P(\zeta)P^{(\infty)}(\zeta)^{-1}=I+\bigO(|s|^{-1}),\qquad\mbox{ as
$s\to\pm\infty$}.
\end{equation}
\end{itemize}

\subsubsection{Final transformation} Define
\begin{equation}\label{def R}
R(\zeta)=\begin{cases} S(\zeta)P(\zeta)^{-1},&\mbox{ for
$\zeta\in U$,}\\
S(\zeta)P^{(\infty)}(\zeta)^{-1},&\mbox{ for $\zeta\in\mathbb
C\setminus \overline U$}.
\end{cases}
\end{equation}
Then $R$ is analytic in the interior of $U$ and across
$(-\infty,z_0)$ because the jumps of $S$ cancel against the jumps of
the parametrices $P$ and $P^{(\infty)}$. On the boundary of $U$, the
matching of the local parametrix with the global parametrix, see
(\ref{matching}), implies that $R$ has a jump that is
$I+\bigO(|s|^{-1})$ as $s\to\pm\infty$.

\subsubsection*{RH problem for $R$:}
\begin{itemize}
\item[(a)] $R$ is analytic in $\mathbb C\setminus \Sigma_R$, with $\Sigma_R=(\Gamma\setminus U)\cup\partial
U$,
\item[(b)] $R_+(\zeta)=R_-(\zeta)v_R(\zeta)$ for $\zeta\in\Sigma_R$, where
\begin{align}
&v_R(\zeta)=I+\bigO(|s|^{-1}),&&\mbox{ as $s\to\pm\infty$, for $\zeta\in\partial U$},\\
&v_R(\zeta)=I+\bigO(e^{-c|s|(|\zeta|+1)}),&&\mbox{ as
$s\to\pm\infty$, for $\zeta\in\Sigma_R\setminus \partial U$},
\end{align}
\item[(c)] There exists a matrix $R_1=R_1(s)$ such that
$R(\zeta)=I+R_1\zeta^{-1}+\bigO(\zeta^{-2})$ as $\zeta\to\infty$.
\end{itemize}
It is a standard fact that the solution to a RH problem of this form
(with small jump matrices and normalized at infinity) is close to
the identity matrix \cite{DKMVZ1}: we have
\begin{equation}
R(\zeta)=I+\bigO(|s|^{-1}),\qquad\mbox{ as $s\to\pm\infty$,}
\end{equation}
and for the residue matrix at infinity we have
\begin{equation}
R_1(s)=\bigO(|s|^{-1}),\qquad \mbox{ as $s\to\pm\infty$}.
\end{equation}
Using (\ref{RHP Pinfty: c}) and (\ref{def R}), one derives the
identity
\begin{equation}
R_1=B_1-\frac{z_0}{4}\sigma_3,
\end{equation}
and this implies that
\begin{equation}
        B_{1,11} = \frac{z_0}{4}+\bigO(|s|^{-1}), \qquad
 B_{1,12} = \bigO(|s|^{-1}),\qquad \mbox{ as $s\to\pm\infty$}.
    \end{equation}
By (\ref{qB1}) we obtain
\begin{equation}\label{as q3}
q(s)=-\frac{z_0}{2}|s|^{\frac{1}{m+1}}+\bigO(|s|^{-\frac{m}{m+1}}),\qquad
\mbox{ as $s\to\pm\infty$},
\end{equation}
Now (\ref{as q}) follows from (\ref{z0t}), and the more general
asymptotic formula (\ref{as q2}) follows from (\ref{z0t}).

\section{Critical behavior of solutions to the KdV equation}\label{section
3} In this section we will prove Theorem \ref{theorem: KdV} and show
that the pole-free solutions to the even members of the Painlev\'e I
hierarchy describe the critical behavior of solutions to the KdV
equation. A RH procedure to obtain asymptotics for KdV solutions was
developed in \cite{DVZ, DVZ2}. In \cite{CG} the method was used to
prove Theorem \ref{theorem: KdV} in the generic case where $m=2$.
For the sake of brevity and because many of the arguments are valid
also for $m>2$, we will refer to this paper at several points. In
the RH analysis of the KdV RH problem, we will construct auxiliary
matrix functions $S$, $P^{(\infty)}$, $P$, and $R$. They are not the
same functions as in the previous section, we hope this does not
cause any confusion.

\subsection{RH problem for the KdV equation}
Given initial data $u_0(x)$ satisfying the conditions specified in
the introduction (i.e.\ $u_0(x)$ is negative, real analytic, has a
single negative hump, and decays sufficiently fast at $\pm\infty$),
we are interested in the solution $u(x,t,\e)$ to the Cauchy problem
for the KdV equation (\ref{KdV}). The following RH problem
characterizes $u(x,t,\e)$ at any time $t>0$.
\subsubsection*{RH problem for $M$:}
\begin{itemize}
\item[(a)] $M:\mathbb C\backslash \mathbb{R}\to \mathbb C^{2\times 2}$ is analytic. \item[(b)] $M$ has continuous boundary values $M_+(\lb)$ and $M_-(\lb)$ when approaching $\lambda\in\mathbb R\setminus\{0\}$ from above and below, and
\begin{align*}&M_+(\lb)=M_-(\lb){\small \begin{pmatrix}1&r(\lb;\e)
e^{2i\alpha(\lambda;x,t)/\e}\\
-\bar r(\lb;\e)e^{-2i\alpha(\lb;x,t)/\e}&1-|r(\lb;\e)|^2
\end{pmatrix}},&\mbox{ for $\lb<0$,}\\
&M_+(\lb)=M_-(\lb)\sigma_1,\quad
\sigma_1=\begin{pmatrix}0&1\\1&0\end{pmatrix},&\mbox{ for $\lb>0$},
\end{align*}
with $\alpha$ given by
\begin{equation}\label{def alpha1}\alpha(\lambda;x,t)=4t(-\lambda)^{3/2}+x(-\lambda)^{1/2}.\end{equation}
The branches of $(-\lambda)^{3/2}$ and $(-\lambda)^{1/2}$ are
analytic in $\mathbb C\setminus [0,+\infty)$ and positive for
$\lambda<0$.
\item[(c)] As $\lambda\to\infty$,
\begin{equation}\label{RHP M:c}M(\lb)=\left(I+\bigO(\lambda^{-1})\right)
\begin{pmatrix}1&1\\&\\i\sqrt{-\lb}&-i\sqrt{-\lb}\end{pmatrix}.
\end{equation}
\end{itemize}
The solution $M=M(\lambda;x,t,\e)$ depends on $x,t,\e$. If
$r(\lambda;\e)$ is the reflection coefficient from the left for the
Schr\"odinger equation $\e^2\frac{d^2}{dx^2}f+u_0(x)f=\lambda f$
with potential $u_0$, then it is known that
\begin{equation}\label{uM}
u(x,t,\e)=-2i\e\frac{\partial}{\partial
x}\lim_{\lambda\to\infty}\left(\sqrt{-\lambda}[M_{11}(\lambda;x,t,\e)-1]\right)
\end{equation}
is the solution to the KdV equation with initial data $u_0(x)$ at
time $t\geq 0$. Using certain smoothness and asymptotic (as $\e\to
0$) properties of the reflection coefficient, the RH problem for $M$
can be transformed to a RH problem with modified jump matrices. We
refer to \cite{CG} for the explicit construction of the function $S$
which satisfies the RH problem stated below, with jumps on a
deformed jump contour, see Figure \ref{figure: S}: lenses are opened
along an interval $(-1-\delta,u_c)$ for some small $\delta>0$. The
point $u_c$ is the Hopf solution $u(x,t)$ evaluated at the point
$x_c$ and time $t_c$ of gradient catastrophe.
\begin{figure}[t]
\begin{center}
    \setlength{\unitlength}{1.2mm}
    \begin{picture}(137.5,26)(22,11.5)
        \put(90,25){\thicklines\circle*{.8}}
        \put(45,25){\thicklines\circle*{.8}}
        \put(47,26){$-1-\delta$}
        \put(90,26){$0$}
         \put(74,26){$u_c$}
         \put(64,34,5){$\Sigma_1$}
         \put(64,13){$\Sigma_2$}
        \put(107,25){\thicklines\vector(1,0){.0001}}
        \put(90,25){\line(1,0){35}}

        \put(22,25){\line(1,0){23}}
        \put(35,25){\thicklines\vector(1,0){.0001}}
        \put(75,25){\line(1,0){15}}
        \put(84,25){\thicklines\vector(1,0){.0001}}
        \qbezier(45,25)(60,45)(75,25) \put(61,35){\thicklines\vector(1,0){.0001}}
        \qbezier(45,25)(60,5)(75,25) \put(61,15){\thicklines\vector(1,0){.0001}}
\end{picture}
\caption{The jump contour $\Sigma_S$ after the transformation
$M\mapsto S$}\label{figure: S}
\end{center}
\end{figure}
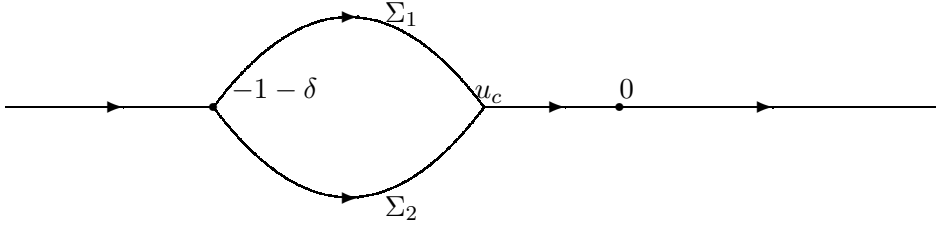
\subsubsection*{RH problem for $S$:}
\begin{itemize}
\item[(a)] $S$ is analytic in $\mathbb C\setminus \Sigma_S$,
\item[(b)] $S_+(\lambda)=S_-(\lambda)v_S(\lambda)$ for $\lambda\in\Sigma_S$,
with
\begin{equation}\label{vS}
v_S(\lambda)=\begin{cases}
\begin{pmatrix}1& i\kappa(\lb;\e)e^{\frac{2i}{\e}\phi(\lb)}\\
0&1
\end{pmatrix},&\mbox{ on $\Sigma_1$},\\[3ex]
 \begin{pmatrix}1&0\\
i\bar\kappa(\bar\lb;\e)e^{-\frac{2i}{\e}\phi(\lb;x)}&1
\end{pmatrix},&\mbox{ on $\Sigma_2=\overline{\Sigma_1}$,}\\[3ex]
\begin{pmatrix}e^{\frac{-2i}{\e}\phi_+(\lb;x)}
&i\kappa(\lambda;\e)\\
i\bar\kappa(\lambda;\e)&(1-|r(\lb)|^2)e^{\frac{2i}{\e}\phi_+(\lb;x)}
\end{pmatrix},&\mbox{ as $\lambda\in (u_c,0)$,}\\[3ex]
\sigma_1,&\mbox{\hspace{-3cm} as $\lambda\in(0,+\infty)$,}
\end{cases}
\end{equation}
and for $\lambda\in(-\infty,-1-\delta)$, we have
\begin{equation}
v_S(\lambda)=I+\bigO(e^{-\frac{c}{\e}(|\lambda|+1)}),\qquad\mbox{ as
$\e\to 0$, $c>0$,}
\end{equation}
uniformly in $\lambda$ for $x,t$ sufficiently close to $x_c,t_c$.
\item[(c)] $S(\lambda)=\left(I+\bigO(\lambda^{-1})\right)\begin{pmatrix}1&1\\
i\sqrt{-\lambda}&-i\sqrt{-\lambda}\end{pmatrix}$
as $\lambda\to\infty$.
\end{itemize}
$S$ can be expressed explicitly in terms of $M$ \cite{CG}, and it
follows from this explicit expression that
\begin{equation}\label{uS}
u(x,t,\e)=u_c-2i\e \frac{\partial}{\partial x}S_{1,11}(x,t,\e),
\end{equation}
where
\begin{equation}
S_{11}(\lb;x,t,\e)=1+\frac{S_{1,11}(x,t,\e)}{\sqrt{-\lb}}+\bigO(\lb^{-1}), \qquad \mbox{ as $\lb\to\infty$.}
\end{equation}
The function $\kappa$ can be expressed in terms of the (analytic
continuation of the) reflection coefficient and satisfies the
important asymptotic property
\begin{equation}
\kappa(\lambda;\e)=1+\bigO(\e),\qquad\mbox{ for
$\lambda\in\Sigma_1\cup[u,0]$},\qquad\mbox{ as $\e\to 0$.}
\end{equation}Furthermore $\phi$ depends explicitly on the
initial data:
\begin{multline}
\label{phi}
\phi(\lambda;x,t)=\sqrt{u_c-\lambda}(x-x_c-6u_c(t-t_c))+4(u_c-\lambda)^{3/2}(t-t_c)\\+\int_{\lambda}^{u_c}(f_L'(\xi)+6t_c)\sqrt{\xi-\lambda}d\xi.
\end{multline}
If
\begin{equation}\label{non-generic2}
f_L^{(2)}(u_c)=f_L^{(3)}(u_c)=f_L^{(4)}(u_c)=\ldots =
f_L^{(m)}(u_c)=0, \qquad f_L^{(m+1)}(u_c)\neq 0,
\end{equation}
repeated integration by parts gives (since $f_L'(u_c)+6t_c=0$)
\begin{multline}
\label{phi2}
\phi(\lambda;x,t)=\sqrt{u_c-\lambda}(x-x_c-6u_c(t-t_c))+4(u_c-\lambda)^{3/2}(t-t_c)\\
+\frac{2^m}{(2m+1)!!}\int_{\lambda}^{u_c}f_L^{(m+1)}(\xi)(\xi-\lambda)^{\frac{2m+1}{2}}d\xi.
\end{multline}
For any fixed neighborhood $\mathcal U$ of $u_c$, it was also proved
in \cite{CG} that there exists $\delta>0$ such that
\begin{equation}
v_S(\lambda)=\begin{cases} I+\bigO(e^{-\frac{c}{\e}}),&\mbox{ for
$\lambda\in\Sigma_S\setminus(\mathcal U\cup (u_c,+\infty))$,}\\
i\sigma_1+\bigO(\e),&\mbox{ for $\lambda\in(u_c,0)\setminus\mathcal
U$,}
\end{cases}
\qquad \mbox{ as $\e\to 0$,}
\end{equation}
if $|x-x_c|<\delta$ and $|t-t_c|<\delta$.

\subsection{Construction of the global parametrix}
If we ignore the jump matrices that are small as $\e\to 0$ and the
jumps in a fixed sufficiently small neighborhood $\mathcal U$ of
$u_c$, we obtain the following RH problem:
\subsubsection*{RH problem for $P^{(\infty)}$:}
\begin{itemize}
\item[(a)] $P^{(\infty)}:\mathbb C\setminus [u_c, +\infty) \to
\mathbb C^{2\times 2}$ is analytic, \item[(b)] $P^{(\infty)}$
satisfies the jump conditions
\begin{align}
&P_+^{(\infty)}=P_-^{(\infty)}\sigma_1, &\mbox{ on $(0, +\infty)$},\\
&P_+^{(\infty)}= iP_-^{(\infty)}\sigma_1, &\mbox{ on
$(u_c,0)$},\label{RHP Pinfty b}
\end{align}
\item[(c)]$P^{(\infty)}$ has the following behavior as $\lambda\to\infty$,
\begin{equation}P^{(\infty)}(\lambda)=(I+\bigO(\lambda^{-1}))
\begin{pmatrix}1&1\\ i(-\lambda)^{1/2} & -i(-\lambda)^{1/2}\end{pmatrix}
. \label{RHP Pinfty c}\end{equation}
\end{itemize}
This RH problem is solved by
\begin{equation}
\label{def Pinfty0}
P^{(\infty)}(\lambda)=(-\lambda)^{1/4}(u_c-\lambda)^{-\sigma_3
/4}\begin{pmatrix}1&1\\ i& -i
\end{pmatrix}.
\end{equation}

\subsection{Construction of the local parametrix}

We need to construct a local parametrix in a neighborhood $\mathcal
U$ of $u_c$. As $\e\to 0$, we have $\kappa(\lambda)=1+\bigO(\e)$,
and we will construct a function $P$, defined in $\mathcal U$, which
satisfies the same jump relations as $S$, but in the limiting case
where $\kappa$ is set to $1$.

\subsubsection*{RH problem for $P$:}
\begin{itemize}
\item[(a)]$P:\overline{\mathcal U}\setminus \Sigma_S\to\mathbb C^{2\times 2}$ is
analytic,
\item[(b)]$P$ satisfies the following jump condition on $\mathcal U \cap
\Sigma_S$,
\begin{equation}\label{RHP
P:b}P_+(\lambda)=P_-(\lambda)v_P(\lambda),
\end{equation}
with $v_P$ given by
\begin{equation}\label{vP}
v_{P}(\lb)=\begin{cases}
\begin{array}{ll}
\begin{pmatrix}1&ie^{\frac{2i}{\e}\phi(\lb;x,t)}\\
0&1
\end{pmatrix},&\mbox{ as $\lb\in\Sigma_1$},\\[3ex]
 \begin{pmatrix}1&0\\
ie^{-\frac{2i}{\e}\phi(\lb;x,t)}&1
\end{pmatrix},&\mbox{ as $\lb\in\Sigma_2$,}\\[3ex]
\begin{pmatrix}e^{-\frac{2i}{\e}\phi_+(\lb;x,t)}&i\\
i&0
\end{pmatrix},&\mbox{ as $\lambda\in (u_c,0)$,}
\end{array}
\end{cases}
\end{equation}
\item[(c)] in the double scaling limit where  $\epsilon\to 0$ and simultaneously $x\to x_c$, $t\to t_c$ in such a way that
\begin{equation}\label{doublescaling}
\lim\frac{x- x_c-6u_c (t-t_c)}{\e^{\frac{2m+2}{2m+3}}
k^{1/2}}=\tau_0, \quad \lim\dfrac{-3(t-t_c)}{\e^{\frac{2m}{2m+3}}
k^{3/2}}= \tau_1,\quad \tau_0,\tau_1\in\mathbb R,
\end{equation}
with $k$ given by (\ref{k}), we have the matching
\begin{equation}\label{RHP P:c}
P(\lambda)P^{(\infty)}(\lambda)^{-1}\to I, \qquad \mbox{ for
$\lambda\in \partial \mathcal U$.}
\end{equation}
\end{itemize}

We will use the RH solution $\Psi=\Psi^{(m)}$ studied in Section
\ref{section: RHP} to construct the local parametrix $P$. First we
transform the RH problem for $\Psi$ to a RH problem for $\Phi$ which
models the jumps needed for $P$ in an appropriate way.

\subsubsection{Modified model RH problem}
Define \begin{equation}\label{def Phi1} \Phi(\zeta;s,t_1)=
e^{\frac{-\pi
i}{4}\sigma_3}\Psi(\zeta;s,t_1,0,\ldots,0)e^{\theta(\zeta;s,t_1,0,\ldots,
0)\sigma_3}
\begin{pmatrix}0&-1\\1&0\end{pmatrix}e^{\frac{\pi
i}{4}\sigma_3}\end{equation} for $\Im\zeta >0$, and
\begin{equation}\label{def Phi2} \Phi(\zeta;s,t_1)=e^{-\frac{\pi
i}{4}\sigma_3}\Psi(\zeta;s,t_1,0,\ldots,0)e^{\theta(\zeta;s,t_1,0,\ldots,0)\sigma_3}e^{\frac{\pi
i}{4}\sigma_3}
\end{equation}
for $\Im\zeta <0$. We also write
\begin{equation}
\widetilde\theta(\zeta;s,t_1)=-\frac{4}{2m+3}(-\zeta)^{\frac{2m+3}{2}}-\frac{4}{3}t_1(-\zeta)^{3/2}+s(-\zeta)^{1/2},
\end{equation}
which is related to $\theta$ in the case where $t_2=\ldots=t_{m-1}=0$, but with its branch cut on
$(0,+\infty)$. One has the identities
\begin{equation}
\theta=i\widetilde\theta_+,\quad\mbox{on $(0,+\infty)$,}\qquad
\theta=i\widetilde\theta,\quad\mbox{on $\Gamma_2$,}\qquad
\theta=-i\widetilde\theta,\quad\mbox{on $\Gamma_4$.}
\end{equation}
 Then it is straightforward to verify that $\Phi$
solves the RH problem
\subsubsection*{RH problem for $\Phi$:}
\begin{itemize}
    \item[(a)] $\Phi$ is analytic for $\zeta\in\mathbb{C}
    \setminus\widehat\Gamma$, with $\widehat\Gamma=\Gamma_1\cup\Gamma_2\cap\Gamma_4$.
    \item[(b)] $\Phi$ satisfies the following jump relations on
    $\widehat\Gamma$,
    \begin{align}
        \label{RHP Phi: b1}
        &\Phi_+(\zeta)=\Phi_-(\zeta)\begin{pmatrix}
            e^{-2i\widetilde\theta_+(\zeta;s,t_1)} & i \\
            i & 0
        \end{pmatrix}
        ,& \mbox{for $\zeta\in\Gamma_1$,} \\[1ex]
        \label{RHP Phi: b2}
        &\Phi_+(\zeta)=\Phi_-(\zeta)\begin{pmatrix}
            1 & ie^{2i\widetilde\theta(\zeta;s,t_1)} \\
            0 & 1
        \end{pmatrix}
        ,& \mbox{for $\zeta\in\Gamma_2$.}
        \\[1ex]
        \label{RHP Phi: b3}
        &\Phi_+(\zeta)=\Phi_-(\zeta)
        \begin{pmatrix}
            1 & 0 \\
            ie^{-2i\widetilde\theta(\zeta;s,t_1)} & 1
        \end{pmatrix},& \mbox{for $\zeta\in\Gamma_4$.}
    \end{align}
    \item[(c)] $\Phi$ has the following behavior at infinity,    \begin{equation}\label{RHP Phi: c}
        \Phi(\zeta)=\frac{1}{\sqrt
        2}(-\zeta)^{-\frac{1}{4}\sigma_3}\begin{pmatrix}1&1\\-1&1\end{pmatrix}\left(I+ih\sigma_3(-\zeta)^{-1/2}
        +\bigO(\zeta^{-1})\right),
    \end{equation}
    with the branches positive for $\zeta<0$ and analytic off
    $[0,+\infty)$.
\end{itemize}

We search for a parametrix $P$ of the form
\begin{equation}\label{definition hatP}
P(\lambda)=E(\lambda;\epsilon)\Phi(\epsilon^{-\frac{2}{2m+3}}f(\lambda);
\epsilon^{-\frac{2m+2}{2m+3}}\tau_0(\lambda;x,t),\epsilon^{-\frac{2m}{2m+3}}\tau_1(\lambda;t)),
\end{equation}
where $E$, $f$, $\tau_0, \tau_{1}$ are analytic in $\mathcal U$. So
we evaluate $\Phi(\zeta;s,t_1)$ at the values
\begin{align}
&\label{hat x}\zeta=\epsilon^{-\frac{2}{2m+3}}f(\lambda),&&
s=\epsilon^{-\frac{2m+2}{2m+3}}\tau_0(\lambda;x,t),\\
&t_1=\epsilon^{-\frac{2m}{2m+3}}\tau_1(\lambda;t),
\end{align}
and we will construct $f$, $\tau_0$, and $\tau_1$ in such a way that
\begin{equation}\label{condition f g 1}
\widetilde\theta(\e^{-\frac{2}{2m+3}}f(\lambda);\e^{-\frac{2m+2}{2m+3}}\tau_0(\lambda;x,t),\e^{-\frac{2m}{2m+3}}
\tau_1(\lambda;t))=\frac{1}{\e} \phi(\lambda;x,t).
\end{equation} This condition is satisfied if we
define $f$ by
\begin{equation}\label{def f}
-\frac{4}{2m+3}(-f(\lambda))^{\frac{2m+3}{2}}=\frac{2^m}{(2m+1)!!}
\int_{\lambda}^{u_c}f_L^{(m+1)}(\xi)(\xi-\lambda)^{\frac{2m+1}{2}}d\xi,
\end{equation}
$\tau_1$ by
\begin{equation}\label{def t1}
-\frac{4}{3}\tau_1(\lambda;t)(-f(\lambda))^{\frac{3}{2}}=4(t-t_c)(u_c-\lambda)^{3/2},
\end{equation}
and $\tau_0$ by
\begin{equation}\label{def t0}
\tau_0(\lambda;x,t)(-f(\lambda))^{\frac{1}{2}}=\sqrt{u_c-\lambda}(x-x_c-6u_c(t-t_c)).
\end{equation}
Indeed, summing (\ref{def f})-(\ref{def t0}) gives (\ref{condition f
g 1}) by (\ref{phi2}).
 This defines $f,\tau_0,\tau_1$ analytically near
$u_c$, and we have
\begin{align}
&\label{f0}f(u_c)=0,\qquad f'(u_c)=\left(-\frac{2^{m-1}}{(2m+1)!!}f_L^{(m+1)}(u_c)\right)^{\frac{2}{2m+3}}=k>0,\\
&\label{tau1}\tau_1(u_c)=-\frac{3(t-t_c)}{k^{3/2}},\\
&\label{tau0}\tau_0(u_c)=\frac{x-x_c-6u_c(t-t_c)}{k^{1/2}}.
\end{align}
Since $f$ is a conformal mapping from a neighborhood of $u_c$ to a
neighborhood of $0$, we can choose the lenses of the jump contour
for $S$ in such a way that $f(\Sigma_S\cap\mathcal
U)\subset\widehat\Gamma$. Then for any analytic function $E$ near
$u_c$, $P$ satisfies the required jump conditions on
$\Sigma_S\cap\mathcal U$ (see (\ref{vP})), but we also need the
matching (\ref{RHP P:c}), which has to be valid in the double
scaling limit where $\e\to 0$, $x\to x_c$, $t\to t_c$ in such a way
that (\ref{doublescaling}) holds, or in other words
\[
\lim\e^{-\frac{2m+2}{2m+3}}\tau_0(u_c;x,t)=\tau_0, \qquad
\lim\e^{-\frac{2m}{2m+3}}\tau_1(u_c;t)= \tau_1.
\]
If $\mathcal U$ is sufficiently small, $(\e^{-\frac{2m+2}{2m+3}}\tau_0(\lb;x,t), \e^{-\frac{2m}{2m+3}}\tau_1(\lb;t))$ will lie in a small complex neighborhood of $(\tau_0,\tau_1)$ for $\lb\in\partial \mathcal U$. By (\ref{RHP P:c}), we have
\begin{multline}
P(\lambda)P^{(\infty)}(\lambda)^{-1}=\frac{1}{\sqrt
        2}E(\lambda)(-\e^{-\frac{2}{2m+3}}f(\lambda))^{-\frac{1}{4}\sigma_3}\begin{pmatrix}1&1\\-1&1\end{pmatrix}\\
        \left(I+ih\sigma_3\e^{\frac{1}{2m+3}}
        (-f(\lambda))^{-1/2}
       +\bigO(\e^{\frac{2}{2m+3}})\right)
       P^{(\infty)}(\lambda)^{-1},
\end{multline}
as $\e\to 0$, where $h=h(\e^{-\frac{2m+2}{2m+3}}\tau_0(\lb;x,t), \e^{-\frac{2m}{2m+3}}\tau_1(\lb;t))$.
If we define
\begin{equation}
E(\lambda)=\frac{1}{\sqrt
        2}P^{(\infty)}(\lambda)\begin{pmatrix}1&-1\\1&1\end{pmatrix}(-\e^{-\frac{2}{2m+3}}f(\lambda))^{\frac{1}{4}\sigma_3},
\end{equation} it is easily verified that
$E$ is analytic in $\mathcal U$ and that we have
\begin{multline}
P(\lambda)P^{(\infty)}(\lambda)^{-1}=P^{(\infty)}(\lambda)\\
\left(I+ih\sigma_3\e^{\frac{1}{2m+3}}
        (-f(\lambda))^{-1/2}
         +\bigO(\e^{\frac{2}{2m+3}})\right)
         P^{(\infty)}(\lambda)^{-1},
\end{multline}
in the double scaling limit,
for $\lambda\in\partial \mathcal U$.

\subsection{Final RH problem}
We define $R$ in such a way that it has jumps that are uniformly
$I+\bigO(\e^{\frac{1}{2m+3}})$ in the double scaling limit: we let
 \begin{equation}
\label{def R imp}
R(\lambda; x,t,\e)=\begin{cases}
S(\lambda;x,t,\e)P^{(\infty)}(\lambda)^{-1}, &\mbox{ as $\lambda\in\mathbb C\setminus \overline{\mathcal U},$}\\
S(\lambda;x,t,\e)P(\lambda;x,t,\e)^{-1}, &\mbox{ as $\lambda\in \mathcal U.$}
\end{cases}
\end{equation}

Then, using the fact that
\begin{equation}\label{asymptotics vSP}
v_S(\lambda;x,t,\e)v_P^{-1}(\lambda;x,t,\e)=I+\bigO(\e), \qquad\mbox{ uniformly for
$\lambda\in\mathcal U\cap \Sigma_S$ as $\e\to 0$},
\end{equation}
one can verify that $R$ solves a RH problem of the following form.

\subsubsection*{RH problem for $R$:}
\begin{itemize}
\item[(a)] $R$ is analytic in $\mathbb C\setminus(\Sigma_S\cup \partial\mathcal U)$.
\item[(b)] $R$ has the jump condition $R_+(\lambda;x,t,\e)=R_-(\lambda;x,t,\e)v_R(\lambda;x,t,\e)$ for $\lambda\in \Sigma_S\cup \partial\mathcal U$,
 where
 \begin{align}
 &v_R(\lambda;x,t,\e)=I+\bigO(e^{-\frac{c}{\e}}), &\mbox{ for $\lambda\in\Sigma_S\setminus \overline{\mathcal U}$,}\\
 &v_R(\lambda;x,t,\e)=I+\bigO(\e), &\mbox{ for $\lambda\in\Sigma_S\cap \mathcal U$,}\\
  &v_R(\lambda;x,t,\e)=I+\bigO(\e^{\frac{1}{2m+3}}), &\mbox{ for $\lambda\in\partial \mathcal U$,}
 \end{align}
 in the double scaling limit where $\e\to 0$, $x\to x_c$, $t\to t_c$ and simultaneously $\e^{-\frac{2m+2}{2m+3}}\tau_0(u_c;x,t)\to
 \tau_0$ and $\e^{-\frac{2m}{2m+3}}\tau_1(u_c;t)\to
 \tau_1$.
\item[(c)] As $\lambda\to\infty$, we have
\begin{equation}\label{RHP R:c}R(\lambda;x,t,\e)=I+\frac{R_1(x,t,\e)}{\lambda}+\bigO(\lambda^{-2}).\end{equation}\end{itemize}
On $\partial\mathcal U$ with clockwise orientation, the jump matrix
has the form
\begin{equation}\label{vRexpansion}
v_R(\lambda;x,t,\e)=I+v_1(\lambda;x,t)\e^{\frac{1}{2m+3}}+\bigO(\e^{\frac{2}{2m+3}}),
\end{equation}
with
\begin{equation}
v_1(\lambda;x,t)=ih\cdot (-f(\lambda))^{-1/2} P^{(\infty)}(\lambda)\sigma_3
         P^{(\infty)}(\lambda)^{-1}.
\end{equation}
This is a meromorphic function in $\mathcal U$ with a simple pole at $u_c$, the residue is given by
\begin{equation}
\Res
(v_1;u_c)=-h(\e^{-\frac{2m+2}{2m+3}}\tau_0(u_c;x,t),\e^{-\frac{2m}{2m+3}}\tau_1(u_c;t))k^{-1/2}\begin{pmatrix}0&1\\0&0\end{pmatrix}.
\end{equation}
Then as in \cite[Section 4]{CG} one can conclude that
\begin{equation}
R_1(x,t,\e)=\e^{\frac{1}{2m+3}}\Res (v_1;u_c)+\bigO(\e^{\frac{2}{2m+3}}),
\end{equation}
and by (\ref{uS}) and (\ref{tau0}) this leads to
\begin{eqnarray*}
u(x,t,\e)&=&u_c-2\epsilon\frac{\partial}{\partial x}
R_{1,12}(x,t,\e)\\
&=&u_c+2k^{-1/2}\e^{\frac{2}{2m+3}}\frac{\partial\tau_0(u_c;x,t)}{\partial x}
q(\e^{-\frac{2m+2}{2m+3}}\tau_0(u_c;x,t),\e^{-\frac{2m}{2m+3}}\tau_1(u_c;t),0,\ldots, 0)+\bigO(\e^{\frac{4}{2m+3}})\nonumber\\
&=&u_c-2k^{-1}\epsilon^{\frac{2}{2m+3}}
q(\e^{-\frac{2m+2}{2m+3}}\tau_0(u_c;x,t),\e^{-\frac{2m}{2m+3}}\tau_1(u_c;t),0,\ldots,
0)+\bigO(\e^{\frac{4}{2m+3}}),
\end{eqnarray*}
which proves Theorem \ref{theorem: KdV}.

\section*{Acknowledgements}
The author acknowledges support by the Belgian Interuniversity
Attraction Pole P06/02.

\end{document}